\documentclass[acmsmall,screen,nonacm]{acmart}

\usepackage{amsmath}
\usepackage{mathtools}
\usepackage{stmaryrd}
\usepackage{xspace}
\usepackage{booktabs}
\usepackage{multirow}
\usepackage{microtype}
\usepackage{mathpartir}
\usepackage{placeins}
\usepackage{xcolor}
\usepackage{listings}
\usepackage{pifont}
\usepackage{tikz}

\usetikzlibrary{arrows.meta,calc,fit,positioning}

\newcommand{\secref}[1]{Section~\ref{#1}}

\newcommand{\figref}[1]{Figure~\ref{#1}}
\newcommand{\tabref}[1]{Table~\ref{#1}}
\newcommand{\lemref}[1]{Lemma~\ref{#1}}

\theoremstyle{plain}
\newtheorem{theorem}{Theorem}
\newtheorem{lemma}[theorem]{Lemma}

\theoremstyle{definition}

\theoremstyle{remark}

\newcommand{\tuple}[1]{\langle #1 \rangle}
\newcommand{\set}[1]{\{#1\}}

\newcommand{\mcal}[1]{\mathcal{#1}}
\newcommand{\msf}[1]{\mathsf{#1}}
\newcommand{\mtt}[1]{\ifmmode\mathtt{#1}\else\texttt{#1}\fi}

\newcommand{\cmark}{\ding{51}}
\newcommand{\pmark}{\ensuremath{\triangle}}
\newcommand{\xmark}{\ding{55}}

\newcommand{\sem}[1]{\llbracket #1 \rrbracket}
\newcommand{\asem}[1]{\sem{#1}^{\sharp}}
\newcommand{\step}{\longrightarrow}
\newcommand{\steps}{\longrightarrow^{*}}

\newcommand{\subst}[3]{#1[#2 \mapsto #3]}

\newcommand{\sqle}{\sqsubseteq}

\DeclareRobustCommand{\AgentLang}{\textsc{Etas}}

\DeclareRobustCommand{\CoreEtas}{Core \AgentLang}

\newcommand{\Actions}{\mcal{A}}
\newcommand{\AbsActions}{\Actions^{\sharp}}
\newcommand{\Events}{\mathit{Ev}}
\newcommand{\AbsEvents}{\Events^{\sharp}}
\newcommand{\Traces}{\Events^{*}}
\newcommand{\TraceAbs}{\mcal{T}^{\sharp}}
\newcommand{\AbsDomain}[1]{\mcal{D}_{#1}^{\sharp}}

\DeclareRobustCommand{\TraceSpecAlgebra}{\textsc{TraceSpecAlgebra}}
\newcommand{\TraceSpecAlg}{\mcal{T}_{\msf{alg}}}

\newcommand{\Bad}{\msf{Bad}}
\newcommand{\nf}{\mcal{N}}
\newcommand{\nfsem}[1]{\nf\llbracket #1 \rrbracket}
\newcommand{\compile}{\msf{compile}}

\newcommand{\kw}[1]{\mathbf{#1}}
\newcommand{\etasinst}[2]{\ensuremath{\msf{#1}\langle #2\rangle}}

\newcommand{\TyUnit}{\mathbf{1}}
\newcommand{\TyBool}{\mathbf{bool}}
\newcommand{\TyNum}{\mathbf{num}}
\newcommand{\TyString}{\mathbf{string}}
\newcommand{\Wit}[1]{\msf{Wit}(#1)}
\newcommand{\compty}[3]{#1 \mathbin{!} #2 \mathbin{\triangleright} #3}
\newcommand{\funty}[4]{#1 \to \compty{#2}{#3}{#4}}
\newcommand{\callty}[4]{#1 \Rightarrow \compty{#2}{#3}{#4}}
\newcommand{\handty}[4]{\mathop{!}[#1 \Rightarrow #2\ \kw{for}\ #3]
  \mathbin{\triangleright} #4}

\newcommand{\reqevt}[1]{\msf{request}(#1)}
\newcommand{\commitevt}[1]{\msf{commit}(#1)}
\newcommand{\handleevt}[2]{\msf{handled}(#1,#2)}
\newcommand{\denyevt}[2]{\msf{denied}(#1,#2)}

\newcommand{\effJoin}{\sqcup}

\newcommand{\typejp}[7]{\Xi;#1;#2 \vdash #3 \Rightarrow #4 \mathbin{!} #5 \mathbin{\triangleright} #6 \dashv #7}
\newcommand{\checkjp}[7]{\Xi;#1;#2 \vdash #3 \Leftarrow #4 \mathbin{!} #5 \mathbin{\triangleright} #6 \dashv #7}

\definecolor{etasInk}{HTML}{243044}
\definecolor{etasKeyword}{HTML}{7C3AED}
\definecolor{etasType}{HTML}{0F766E}
\definecolor{etasEffect}{HTML}{2563EB}
\definecolor{etasMeta}{HTML}{D97706}
\definecolor{etasString}{HTML}{C2410C}
\definecolor{etasComment}{HTML}{64748B}
\definecolor{etasBg}{HTML}{F8FAFC}
\definecolor{etasRule}{HTML}{A5B4FC}

\definecolor{frameworkInk}{HTML}{1F2937}
\definecolor{frameworkKeyword}{HTML}{0B5CAD}
\definecolor{frameworkAPI}{HTML}{6B4E16}
\definecolor{frameworkString}{HTML}{047857}
\definecolor{frameworkComment}{HTML}{6B7280}
\definecolor{frameworkBg}{HTML}{F3F6FA}
\definecolor{frameworkRule}{HTML}{CBD5E1}

\lstdefinelanguage{Etas}{
  sensitive=true,
  alsoletter={_,@},
  morekeywords=[1]{
    agent,as,before,case,context,deny,do,else,flow,for,handle,if,impl,in,
    infer,let,limit,match,perform,policy,public,require,return,spec,then,
    tool,trace,with
  },
  morekeywords=[2]{
    bool,i32,string,unit,Array,MemoryRegion,Path,Prompt,Secret,SecretValue,
    Store,Trusted,DraftRequest,Report,ReportId,ReportPath,SearchResults,
    SecretId,UserId
  },
  morekeywords=[3]{
    Agentic,Approval,Attempts,BudgetExceeded,CompanyEmail,Error,Memory,
    PolicyDenied,ProjectMemory,ProjectWorkspace,Tokens,Web,WorkAccount
  },
  morekeywords=[4]{@limits,@model,@tools,risk,stable_id,store},
  morecomment=[l]{//},
  morecomment=[s]{/*}{*/},
  morestring=[b]",
  literate=
    {->}{{$\to$}}2
    {=>}{{$\Rightarrow$}}2
    {<-}{{$\leftarrow$}}2
}

\lstdefinestyle{etasblock}{
  language=Etas,
  basicstyle=\ttfamily\small\color{etasInk},
  keywordstyle=[1]\bfseries\color{etasKeyword},
  keywordstyle=[2]\color{etasType},
  keywordstyle=[3]\color{etasEffect},
  keywordstyle=[4]\color{etasMeta},
  identifierstyle=\color{etasInk},
  commentstyle=\itshape\color{etasComment},
  stringstyle=\color{etasString},
  backgroundcolor=\color{etasBg},
  rulecolor=\color{etasRule},
  frame=none,
  framesep=4pt,
  xleftmargin=0.6em,
  xrightmargin=0.6em,
  columns=fullflexible,
  keepspaces=true,
  showstringspaces=false,
  breaklines=true,
  breakatwhitespace=false,
  tabsize=2,
  captionpos=b
}

\lstdefinestyle{frameworkpython}{
  language=Python,
  basicstyle=\ttfamily\tiny\color{frameworkInk},
  keywordstyle=[1]\bfseries\color{frameworkKeyword},
  keywordstyle=[2]\color{frameworkAPI},
  morekeywords=[2]{TypedDict,StateGraph,START,END},
  identifierstyle=\color{frameworkInk},
  commentstyle=\itshape\color{frameworkComment},
  stringstyle=\color{frameworkString},
  backgroundcolor=\color{frameworkBg},
  rulecolor=\color{frameworkRule},
  frame=none,
  columns=fullflexible,
  keepspaces=true,
  showstringspaces=false,
  breaklines=true,
  breakatwhitespace=false,
  tabsize=2,
  xleftmargin=0pt,
  xrightmargin=0pt,
  captionpos=b
}

\lstdefinestyle{etasinline}{
  language=Etas,
  basicstyle=\ttfamily\color{etasInk},
  keywordstyle=[1]\bfseries\color{etasKeyword},
  keywordstyle=[2]\color{etasType},
  keywordstyle=[3]\color{etasEffect},
  keywordstyle=[4]\color{etasMeta},
  stringstyle=\color{etasString},
  showstringspaces=false,
  columns=fullflexible
}

\AtBeginDocument{%
}

\setcopyright{none}
\title[\AgentLang]{\AgentLang: An Effect-Typed Language for Agent Systems}
\titlenote{\AgentLang{} project repository: \url{https://github.com/etas-project/etas}.}

\newcommand{\HKUSTAffiliation}{%
  \affiliation{%
    \institution{The Hong Kong University of Science and Technology}
    \city{Hong Kong}
    \country{Hong Kong}}}

\author{Huiri Tan}
\HKUSTAffiliation

\author{Yikun Wang}
\HKUSTAffiliation

\author{Puyang Zhang}
\HKUSTAffiliation

\author{Shangyu Li}
\HKUSTAffiliation

\author{Jiasi Shen}
\HKUSTAffiliation

\authorsaddresses{}

\begin{document}

\begin{abstract}
\AgentLang{} is a programming language designed for agent systems.
It treats model-backed agents, tool calls, prompts, typed memory,
human approvals, policies, and execution traces as semantic
program elements rather than library conventions.
The central idea is to separate deterministic computation from
agentic nondeterminism and externally visible actions, while retaining
the direct programming style expected by application programmers.

This paper presents the core design of \AgentLang.
The static semantics assigns ordinary types to values through typing
with spec conformance and an active monitor context. A computation is checked with two
behavioral indices: an escaping effect row and a persistent abstraction
of the typed action trace it may request.
Specs form a terminating compile-time constraint calculus: type specs
provide evidence for polymorphism and resource facts, callable specs
constrain function and stage shapes, and trace specs normalize into
\TraceSpecAlgebra{} objects over allow, deny, and temporal constraints.
Typing checks the requested trace against monitors compiled from these
trace objects and emits explicit residual obligations when dynamic
resources prevent a complete static proof.
The dynamic semantics records requested, handled, denied, and committed
events, and mediates concrete actions against policies, handlers,
deployment permissions, effect boundaries, and the current trace prefix.

We formalize a core calculus where source declarations elaborate to checked
callable descriptors and handlers mediate typed actions without erasing their traces. We state
preservation, progress, type/effect soundness, handler trace-transparency, and
policy safety. We implemented an \AgentLang{} prototype in Rust with a user-facing CLI, typed HIR checks,
effect/policy diagnostics, handler checks, and trace-aware execution
hooks.
The result is a PL foundation for building agent systems whose
authorization, nondeterminism, recovery behavior, and audit evidence can
be reasoned about before and during execution, even when handlers make
some requested effects non-escaping.

\end{abstract}

\begin{CCSXML}
<ccs2012>
 <concept>
  <concept_id>10011007.10011006.10011041</concept_id>
  <concept_desc>Software and its engineering~Formal language definitions</concept_desc>
  <concept_significance>300</concept_significance>
 </concept>
 <concept>
  <concept_id>10003752.10003753</concept_id>
  <concept_desc>Theory of computation~Program semantics</concept_desc>
  <concept_significance>100</concept_significance>
 </concept>
</ccs2012>
\end{CCSXML}

\ccsdesc[300]{Software and its engineering~Formal language definitions}
\ccsdesc[100]{Theory of computation~Program semantics}

\keywords{programming languages, agent systems, effects, policy enforcement, semantics}

\maketitle

\section{Introduction}
\label{sec:introduction}

Agent systems are already programs, but they are not yet treated as
programs. A production agent application has control flow, state, data
dependencies, nondeterministic subcomputations, externally visible
effects, resource contracts, failure recovery paths, and audit traces.
It branches, calls tools, updates memory, delegates to specialized
agents, asks humans for approval, retries failed work, and resumes from
checkpoints. Yet the dominant way to build such systems is still
framework-level composition: a programmer writes ordinary host-language
code around prompt templates, tool registries, vector stores, model
providers, guardrails, and logging hooks.

Addressing this mismatch is the central motivation for \AgentLang. The logical structure of
an agent system is program structure, but it is commonly represented as
host-language objects, callbacks, configuration files, and operational
logs rather than as source-level syntax with types, effects, and
semantics. As a result, questions that should be answered by a
programming language are answered by convention. Which actions may a
model request? Which memory regions may an agent read or write? Was
approval obtained before a high-impact action? Can a run be replayed
without repeating an irreversible operation? Does untrusted text flow
into a privileged prompt channel? Can a retry duplicate an irreversible
side effect? These are not merely framework engineering questions.
They are questions about the meaning of a program.

\paragraph{A missing programming-language account.}
\figref{fig:langgraph-gap} makes the mismatch concrete by showing the
same draft-approve-publish workflow in two representations. The left
panel uses the documented shape of LangGraph's \mtt{StateGraph} API
\cite{langgraphDocs}. Its code structure merely exposes The graph, state schema, and node order without deeper semantic information. The right panel expresses the same workflow as an \AgentLang{}
program, where rich semantic information such as model inference, approval, policy, and email are exposed as
source-level facts.

\begin{figure}[t]
\centering
\begin{minipage}[t]{0.48\linewidth}
\begin{lstlisting}[style=frameworkpython]
from typing_extensions import TypedDict
from langgraph.graph import StateGraph, START, END

class State(TypedDict):
    topic: str
    draft: str
    approved: bool

def draft(s: State):
    return {"draft": call_model(s["topic"])}

def publish(s: State):
    if s["approved"]:
        send_email(s["draft"])
    return {}

g = StateGraph(State)
g.add_node("draft", draft)
g.add_node("publish", publish)
g.add_edge(START, "draft")
g.add_edge("draft", "publish")
g.add_edge("publish", END)
app = g.compile()
\end{lstlisting}
{\centering\footnotesize\textbf{(a)} Framework graph.\par}
\end{minipage}
\hfill
\begin{minipage}[t]{0.48\linewidth}
\begin{lstlisting}[
  style=etasblock,
  basicstyle=\ttfamily\tiny,
  xleftmargin=0pt,
  xrightmargin=0pt
]
@model("reasoner")
agent Draft(req: DraftRequest) -> Report {
  let prompt = Prompt.new()
    .system(Trusted("Draft a short update."))
    .data(req);
  return perform infer<Report>(prompt);
}

spec PublishPolicy: trace =
  +Approval.request
  & +CompanyEmail.send<WorkAccount>
  & (Approval.request >> CompanyEmail.send<WorkAccount>);

flow publish(req: DraftRequest) -> unit
  ![Approval.request, CompanyEmail.send<WorkAccount>]
  ~ PublishPolicy
{
  let draft = Draft.run(req);
  if std.ui.approve("send draft", draft, risk = High) {
    perform CompanyEmail.send(WorkAccount, req.to,
      "Draft update", draft.markdown);
  }
}
\end{lstlisting}
{\centering\footnotesize\textbf{(b)} \AgentLang{} program.\par}
\end{minipage}
\caption{Draft-approve-publish workflow in framework code and
\AgentLang. The framework graph exposes state and callback order but not
email authority or approval dominance; \AgentLang{} makes inference,
effects, and \mtt{PublishPolicy} source-level facts for static checking,
audit, and replay.}
\Description{Side-by-side code comparison. The left panel shows a
LangGraph-style StateGraph with draft and publish callbacks. The right
panel shows an Etas program with typed actions, a publish trace spec,
and an effect row.}
\label{fig:langgraph-gap}
\end{figure}

The design of \AgentLang{} enables the static analysis and optimization for agent systems.
The example is deliberately small. Static analysis is difficult in the framework version because neither the state schema nor the graph edges say that
\mtt{publish} may send email, that \mtt{approved} is an approval token
with a particular scope and freshness, or that every path to the email
action is dominated by an approval event. A framework can add runtime
middleware, interrupts, or guardrails, but the property is not a
source-level effect and policy fact. In the \AgentLang{} version, static analysis is easier because the email
authority appears as the action
\(\etasinst{CompanyEmail.send}{\msf{WorkAccount}}\), the
flow's effect row exposes possible approval and email actions, and
\mtt{PublishPolicy} is a trace spec stating the temporal obligation that
approval must precede email. Similarly, optimization is difficult in the framework
version, because \mtt{draft} is a nondeterministic model call and \mtt{publish}
may perform an irreversible external action, but both are ordinary
Python callbacks. In \AgentLang, optimization is easier because \mtt{Draft.run} records the requested
action \(\etasinst{Agentic.infer}{\msf{Draft.run},\msf{Report}}\), and
\mtt{perform CompanyEmail.send} gives the compiler and runtime the distinction
they need for replay, caching, scheduling, residual policy checks, and
deployment manifests, rather than reconstructing those facts from
callback code and logs.

If agent systems are programs, then the relevant program elements are
not only expressions, functions, and modules. They also include flows,
agents, model-callable tools, prompt values, typed memory regions,
policies, approvals, handlers, runtime resource limits, and traces. A
language for agent systems should make these elements explicit enough to
check, compile, execute, audit, replay, and optimize them.

\begin{table}[t]
\centering
\scriptsize
\setlength{\tabcolsep}{2.2pt}
\renewcommand{\arraystretch}{1.18}
\caption{Capability comparison across representative agent frameworks,
effect languages, and coordination-oriented programming models. Check
marks, triangles, and crosses denote native, partial, and absent support.}
\label{tab:intro-comparison}
\begin{tabular}{@{}p{0.18\textwidth}cccccccc@{}}
\toprule
System
& \begin{tabular}[c]{@{}c@{}}Agent\\constructs\end{tabular}
& \begin{tabular}[c]{@{}c@{}}Typed\\communication\end{tabular}
& \begin{tabular}[c]{@{}c@{}}Effect/action\\inference\end{tabular}
& \begin{tabular}[c]{@{}c@{}}Policy\\automata\end{tabular}
& \begin{tabular}[c]{@{}c@{}}Approval\\and limits\end{tabular}
& \begin{tabular}[c]{@{}c@{}}Prompt\\trust\end{tabular}
& \begin{tabular}[c]{@{}c@{}}Semantic\\trace/replay\end{tabular}
& \begin{tabular}[c]{@{}c@{}}Runtime\\enforcement\end{tabular} \\
\midrule
LangGraph / LangChain \cite{langgraphDocs}
& \pmark & \pmark & \xmark & \xmark & \pmark & \pmark & \cmark & \pmark \\

AutoGen \cite{autogenDocs}
& \pmark & \pmark & \xmark & \xmark & \pmark & \pmark & \pmark & \pmark \\

CrewAI \cite{crewaiDocs}
& \pmark & \pmark & \xmark & \xmark & \pmark & \pmark & \pmark & \pmark \\

Eino \cite{einoDocs}
& \pmark & \pmark & \xmark & \xmark & \pmark & \xmark & \pmark & \pmark \\

OpenAI Agents SDK \cite{openaiAgentsSDK}
& \pmark & \pmark & \xmark & \xmark & \pmark & \pmark & \pmark & \pmark \\

Microsoft Agent Framework \cite{microsoftAgentFramework}
& \pmark & \pmark & \xmark & \xmark & \pmark & \xmark & \pmark & \pmark \\

Koka / Links / Frank-style effect languages
\cite{leijen2014koka,leijen2017typedirected,hillerstrom2016liberating,lindley2017dobedobedo}
& \xmark & \xmark & \cmark & \xmark & \xmark & \xmark & \xmark & \pmark \\

Capability/effect systems \cite{brachthaeuser2022effects}
& \xmark & \xmark & \cmark & \xmark & \xmark & \xmark & \xmark & \pmark \\

Choral / choreographic programming
\cite{giallorenzo2024choral,bates2025efficient}
& \xmark & \cmark & \xmark & \xmark & \xmark & \xmark & \xmark & \pmark \\

\textbf{\AgentLang}
& \textbf{\cmark} & \textbf{\cmark} & \textbf{\cmark} & \textbf{\cmark}
& \textbf{\cmark} & \textbf{\cmark} & \textbf{\cmark} & \textbf{\cmark} \\
\bottomrule
\end{tabular}
\end{table}

\tabref{tab:intro-comparison} summarizes the current gap by capability
rather than by implementation mechanism. Contemporary agent frameworks
provide orchestration, memory, human-in-the-loop controls, guardrails,
tracing, and deployment, but expose them mainly as host-language APIs or
platform services
\cite{langgraphDocs,autogenDocs,crewaiDocs,einoDocs,openaiAgentsSDK,
microsoftAgentFramework}. Conversely, effect languages provide a
well-founded account of typed effects and handlers
\cite{leijen2017typedirected,hillerstrom2016liberating,
lindley2017dobedobedo}, but they are not designed around model
inference, model-callable tools, approval-dominated side effects, prompt
trust, durable traces, or replay. The missing point is a language whose
semantic objects are precisely the authority, policy, nondeterminism, and
trace boundaries that agent frameworks currently manage outside the type
system.

\AgentLang{} is our answer to this missing account. It does not try to
make model outputs deterministic, to replace the platforms that host
agent workloads, or to present agents, effects, policies, and monitors as
separate inventions. The contribution is their combination as one
language semantics: agentic nondeterminism, authority-bearing actions,
trace safety specs, and typed tool surfaces are checked
and executed as parts of the same program.

\paragraph{First-class agent structure.}
In \AgentLang, an \mtt{agent} is not an SDK object hidden behind a class
interface. It is a compiler-visible semantic node. An agent declaration
has input and output types, a model-inference boundary, a tool surface,
prompt and context construction, nondeterminism, trace and replay
behavior, and an action summary. A flow that calls an agent is
therefore not merely calling a Python callback. It is entering a typed
region in which model-backed inference may occur, selected tools may be
requested, and trace events may be produced. This
is what lets the compiler reason about agent fusion, tool-surface
specialization, context-harness optimization, effect-aware scheduling,
checkpointing, replay, and audit, rather than rediscovering these facts
from framework objects and logs.

\paragraph{Action-centric effects.}
\AgentLang{} follows typed effect systems and algebraic effects
\cite{plotkin2003algebraic,plotkin2013handling,bauer2015programming,
leijen2014koka,leijen2017typedirected}, but it uses effects for the
semantics of an agent runtime rather than as unrestricted programmable
control. An effect name describes a behavioral family, such as company
email or filesystem access. An action is the concrete authority boundary
inside that family: sending through a work account and reading a path
within a reports root are different parameterized actions. A
\mtt{perform} is a traceable, interceptable, auditable runtime event.
Handlers may interpret selected action requests for testing, replay,
recovery, or host adaptation, but handlers do not grant authority.
Authorization remains a property of the active effect boundary,
deployment grants, policy, approval evidence, and sandboxing.

This distinction matters because handling an action should not erase the
fact that the action was requested. A conventional effect handler may
explain an operation and thereby remove the corresponding obligation
from the caller. That is the right account for control, but it is not
enough for safety and audit in agent systems. If a flow attempts to send
email and a dry-run handler intercepts the request, no external email is
committed; nevertheless, the attempt is still relevant to authorization,
replay, and audit. \AgentLang{} therefore separates the
effects that still escape to the caller from the action requests that
occurred along the way. A handler may discharge the control obligation
for \mtt{CompanyEmail.send}, but the trace still records that the program
asked for email authority. Agent-runtime actions such as model inference
are treated in the same spirit: they are visible to trace planning,
policy, and replay even when they are not exposed as ordinary
user-facing effects.

\paragraph{Specs as static constraints.}
\AgentLang{} uses \mtt{spec} as a uniform compile-time abstraction for
constraints over types, callable shapes, and traces. Trace specs are the
source-level form of safety policy, but they are only one branch of the
same conformance system. Type specs provide trait-like evidence for
polymorphic functions and resource relations; callable specs constrain
flow, tool, and agent-method shapes; trace specs constrain requested
actions. During checking, type specs produce witnesses and callable specs
produce callable artifacts in the static signature. Trace specs are
kind-checked, normalized to \TraceSpecAlgebra{} objects, compiled to
monitors, and used during typing to check the requested-action
abstraction. When the compiler can prove that the requested trace
satisfies the active monitors, no run-time check is needed. When the
answer depends on run-time information such as paths, tenants, approval
freshness, or model-chosen tool arguments, the compiler makes that
obligation explicit as a residual check. When the monitor is definitely
violated, the program is rejected.

\paragraph{System consequences.}
The semantic split above yields system-level capabilities that are hard
to obtain when agent behavior is scattered across callbacks,
middleware, logs, and deployment configuration. For safety, the compiler
can detect missing approvals, forbidden resource access, hidden
authority, prompt-trust violations, and tools that expose effects outside
their declared surface. For reliability, trace, replay, resampling,
checkpointing, and bounded execution are language-level semantics rather
than framework conventions. For optimization, the compiler can use
agent/action summaries to specialize tool surfaces, fuse compatible
agent regions, optimize context construction, and schedule around
effects. For auditability, every action request, handler decision,
approval, tool call, commit, and denial belongs to the same
typed trace vocabulary. For enforcement, static analysis and dynamic
monitoring are deliberately connected: static checking is conservative,
and every obligation it cannot prove is made explicit as a residual
runtime check.

\paragraph{Contributions.}
This paper makes the following contributions.

\begin{enumerate}
\item We identify first-class semantic constructs for production agent
systems: flows, agents, model-callable tools, typed prompts and messages,
scoped memory, approvals, handlers, and traces. These constructs
make model inference, tool surfaces, context harnesses, replay, and
nondeterminism visible to the compiler.

\item We formalize an action-centric effect system in \CoreEtas, a
small calculus that separates values from computations, deterministic
flow calls from agentic inference, escaping effects from persistent
requested-action traces, and ordinary tool values from authority-bearing
action requests.

\item We give a static semantics based on typing with spec conformance.
The type-and-effect system distinguishes effects that escape to callers
from typed actions that may be requested even when handlers interpret
them. The static signature carries declaration summaries and conformance
evidence, while any obligation that cannot be proved statically is made
explicit as a residual run-time check.

\item We define \mtt{spec} as a unified compile-time calculus for type,
callable, and trace constraints. Type specs provide evidence for
trait-like polymorphism, callable specs constrain function and stage
polymorphism, and trace specs normalize to \TraceSpecAlgebra{} objects
that compile to monitors, where automata and abstract interpretation
produce either a static proof, residual checks, or rejection.

\item We give a dynamic semantics in which every concrete action is
split into request, handled, denied, and commit events. Request events
are policy-visible; commit events require effect-boundary, deployment,
approval, and sandbox authorization. We state the corresponding
preservation, progress, type-soundness, effect-soundness,
handler trace-transparency, and policy-safety theorems.

\item We implement a prototype in Rust:
a user-facing compiler and interpreter, typed HIR checking, effect/policy commands, handler
diagnostics, package metadata, checkpoint/resume interfaces,
interpreter-enforced token and attempt limits, and a test suite of source
fixtures. We also evaluate our approach for
ensuring the safety, reliability, optimization, audit, and
static--dynamic enforcement benefits of the design.
\end{enumerate}

\paragraph{Paper structure.}
\secref{sec:overview} first develops the
language through a running example. \secref{sec:core-etas} then defines
the core calculus. \secref{sec:static-semantics} presents types, effects,
trace specs, \TraceSpecAlgebra, policy automata, and abstract interpretation.
\secref{sec:dynamic-semantics} defines traces and runtime enforcement.
\secref{sec:soundness} states the main safety theorems.
\secref{sec:implementation} and
\secref{sec:evaluation} discuss the prototype and evaluation plan, and
\secref{sec:related-work} situates the design among effects, handlers,
capabilities, runtime monitoring, choreographic languages, and
agent-oriented programming models. \secref{sec:conclusion} concludes.

\section{Overview}
\label{sec:overview}

This section introduces \AgentLang{} through a small but representative
agent workflow. The syntax is intentionally surface-level. The core
calculus used for the formal development appears in
\secref{sec:core-etas}.

\paragraph{A safe report-writing workflow.}
Consider an organization that uses an agent to prepare a technical report
from a project workspace. The workflow in
\figref{fig:overview-example} may search the web, read a typed memory
region containing prior reports, ask a model to draft prose, write a
file, and send a notification. Two properties are required. First,
workspace writes and email sends must be explicitly approved. Second,
secret values may be read by deterministic code but must not be placed in
a model prompt unless they are declassified.

\begin{figure}[t]
\centering
\begin{lstlisting}[
  style=etasblock,
  basicstyle=\ttfamily\scriptsize\color{etasInk},
  xleftmargin=0pt,
  xrightmargin=0pt
]
type ProjectMemorySchema = MemoryRegion<{
  Reports: Store<ReportId, Report>,
  Secrets: Store<SecretId, SecretValue<string>>,
}>;
let ProjectMemory = std.memory.region<ProjectMemorySchema>(stable_id = "project_memory", store = "project-main");
tool search_web(q: string) -> SearchResults ![Web.search<_>];
tool write_report(path: string, body: string) -> unit ![ProjectWorkspace.write<"reports/**">];
tool notify(owner: UserId, path: string) -> unit ![CompanyEmail.send<WorkAccount>];
flow DraftPrompt(req: DraftRequest) -> Prompt ![Memory.read<ProjectMemory.Reports>] {
  let prior = ProjectMemory.Reports.get(req.related);
  return Prompt.new()
    .system(Trusted("Write a concise technical report."))
    .data({ request = req, prior });
}
@model("GPT-5.5-Pro")
@tools([search_web])
@limits([Tokens(12000), Attempts(2)])
agent DraftReport(req: DraftRequest) -> Report ![Memory.read<ProjectMemory.Reports>, Web.search<_>] {
  return perform infer<Report>(DraftPrompt(req));
}
spec ApprovalBefore<A: Action> = +Approval.request & +A & (Approval.request >> A);
spec PublishPolicy = ApprovalBefore<ProjectWorkspace.write<"reports/**">> & ApprovalBefore<CompanyEmail.send<WorkAccount>>;
flow publish_report(req: DraftRequest) -> ReportPath
  ![Web.search<_>, Memory.read<ProjectMemory.Reports>, Approval.request, ProjectWorkspace.write<"reports/**">,
    CompanyEmail.send<WorkAccount>, Error<PolicyDenied>] ~ PublishPolicy
{
  let draft = DraftReport.run(req);
  let path = "reports/" + req.id + ".md";
  if !std.ui.approve("Publish report?", draft, risk = High) { abort("publish rejected"); }
  write_report(path, draft.markdown);
  if !std.ui.approve("Notify owner?", req.owner, risk = Medium) { abort("notify rejected"); }
  notify(req.owner, path);
  return path;
}
\end{lstlisting}
\caption{Surface \AgentLang{} program for safe report drafting and
publication. The example exposes typed memory, tool action rows,
model/tool configuration, resource limits, and
\mtt{ApprovalBefore} trace specs. The \mtt{publish\_report} row bounds
escaping actions; dynamic facts such as exact paths or approval freshness
remain residual checks. Limits are operational, not part of the
\CoreEtas{} metatheory.}
\Description{An Etas program that defines a typed project memory region,
web search, report writing, email notification, a report-drafting agent,
approval trace specs, and a publishing flow with an effect row and
requested-action trace constraint.}
\label{fig:overview-example}
\end{figure}

The figure illustrates four separations that are central to the
language. First, the agent method is ordinary checked code that
constructs a \mtt{Prompt} and reaches model nondeterminism only at
\(\kw{perform}\ \etasinst{infer}{\msf{Report}}\); it does not perform
publishing. Second,
\mtt{@tools([search\_web])} exposes a model-callable boundary to the
agent without granting ambient authority to the surrounding flow. Third,
memory access is typed and effectful. The report agent may read prior
reports, while values from \mtt{ProjectMemory.Secrets} are
\mtt{SecretValue}s and are not prompt-encodable unless an explicit
declassification flow is used. Fourth, the approvals are ordinary
support-flow calls whose effect is visible in the row, while the policy
is written as a trace spec that gives them temporal meaning over the
requested trace. The \mtt{@limits} annotation additionally gives the
interpreter a token-and-attempt contract for model execution. This is an
operational language feature rather than part of the \CoreEtas{}
effect-and-trace metatheory.

\paragraph{Effects summarize obligations, traces summarize requests.}
The type of \mtt{publish\_report} contains not only its input and output
types, but also a row of effects that may escape to the caller. The row
is an upper bound on escaping obligations: a particular execution may not
search the web if the model does not invoke the search tool, but any
execution that commits a search through the default runtime must do so
through the declared action. This
matches the practical role of row-typed effects in Koka
\cite{leijen2014koka,leijen2017typedirected}, while shifting the domain
from exceptions, state, or iterators to authority-bearing agent runtime
operations.

The public row contains ordinary escaping actions, such as workspace
writes under \mtt{reports/}, memory reads, approval requests, and email
sends. Separately, the compiler records a requested-action trace
abstraction, including metadata such as
\(\etasinst{Agentic.infer}{\msf{DraftReport.run},\msf{Report}}\) for the
inference operation
inside the agent method. This abstraction is consumed by the runtime
provider, policy checker, trace system, and replay engine; it is not
identical to the set of effects that remain unhandled. The distinction
matters because a model call is not a tool call and not a deterministic
function call: it is a controlled point of nondeterminism.

\paragraph{Trace specs compile to policies.}
\mtt{PublishPolicy} is a compile-time trace spec, not a runtime callback.
The reusable spec function \(\etasinst{ApprovalBefore}{A}\) abstracts the common
constraint that an approval request is allowed, the target action is
allowed, and the approval must precede the target action. Instantiating
it for workspace writes and email sends beta-reduces at compile time to a
\TraceSpecAlgebra{} normal form containing allow rules and temporal
obligations. These obligations are not local type constraints. They
require a temporal relation between events: a trace prefix ending in a
workspace-write action is permitted only if the prefix already contains a
suitable \mtt{Approval.request} event with a matching scope. \AgentLang{}
compiles \TraceSpecAlgebra{} objects to finite monitors over typed action
traces. The static checker then interprets the program over an abstract
trace domain; if every abstract trace is accepted by the monitor, the
trace spec is statically discharged. If a clause depends on dynamic data,
such as the exact report path or approval freshness, the compiler emits a
residual runtime check.

This is only the trace-kind branch of \mtt{spec}. Type specs resolve to
evidence for generic functions and resource-indexed actions; callable
specs constrain the shape and effect row of flows, tools, generated
agent methods, and composed stages.

\begin{table}
\centering
\scriptsize
\setlength{\tabcolsep}{3pt}
\renewcommand{\arraystretch}{1.12}
\caption{Selected static facts inferred for the overview example.}
\label{tab:overview-summary}
\begin{tabular}{@{}p{0.14\linewidth}|p{0.25\linewidth}|p{0.51\linewidth}@{}}
\toprule
Category & Property & Inferred fact \\
\midrule
\multirow[t]{2}{*}{Structure}
& Model boundary & records
  \(\etasinst{Agentic.infer}{\msf{DraftReport.run},\msf{Report}}\) \\
\cmidrule(lr){2-2}\cmidrule(l){3-3}
& Tool surface & exposes web search only through \mtt{search\_web} \\
\midrule
\multirow[t]{3}{*}{Authority}
& Tool actions & may search web, write workspace, send email \\
\cmidrule(lr){2-2}\cmidrule(l){3-3}
& Memory actions & may read \mtt{ProjectMemory.Reports} \\
\cmidrule(lr){2-2}\cmidrule(l){3-3}
& Requested trace & includes model, approval, write, and email requests \\
\midrule
\multirow[t]{2}{*}{Policy}
& Approval obligations & approval before write and email \\
\cmidrule(lr){2-2}\cmidrule(l){3-3}
& Secret flow & secret-to-model path rejected or residual-checked \\
\midrule
\multirow[t]{3}{*}{Runtime}
& Trace plan & checkpoint model, tool, approval, and write events \\
\cmidrule(lr){2-2}\cmidrule(l){3-3}
& Deployment plan & manifests required actions and residual checks \\
\cmidrule(lr){2-2}\cmidrule(l){3-3}
& Resource contract & interpreter-enforced token and attempt limits \\
\bottomrule
\end{tabular}
\end{table}

\paragraph{Handled does not mean invisible.}
Suppose a test wants to replace \mtt{notify} with a mock implementation.
\AgentLang{} supports a scoped handler for the email request. If the
handler turns the send into a dry run, the email effect no longer
escapes to the caller:
\[
\begin{array}{l}
  \mtt{DryRunEmail : EmailRequest -> EmailResult ![]}\\
  A \ni \reqevt{\msf{CompanyEmail.send}\langle\msf{WorkAccount}\rangle}.
\end{array}
\]
Thus the handler eliminates the caller's control obligation but not the
audit fact that the program requested email. Policy can still inspect
that request, and a real commit still requires the active effect
boundary, deployment grants, policy, and sandboxing. This
restriction intentionally differs from fully general algebraic effects
\cite{plotkin2013handling,kammar2013handlers,lindley2017dobedobedo}.
The goal is not to make all control effects programmable. The goal is to
let tests, replay, recovery, and host adapters interpret selected runtime
actions without undermining the authority model.

\paragraph{What the compiler learns.}
From the example, the compiler can extract the summary shown in
\tabref{tab:overview-summary}. This summary is useful before execution:
it drives diagnostics, deployment manifests, policy checks, scheduling
decisions, and trace planning.

\section{Core \AgentLang}
\label{sec:core-etas}

\CoreEtas{} is an internal calculus. It is not intended to be the full
surface language. Its purpose is to isolate the semantic commitments that
make agent programs analyzable: values are separated from effectful
computations; agent and tool boundaries are represented by checked
descriptors; external authority is represented by performed actions; and
every action produces a trace event.

\subsection{Syntax}
\label{sec:core-syntax}

\figref{fig:core-syntax} gives the selected internal grammar. Primitive
\(p\) covers unit, booleans, numbers, and strings; values also include
records, unary functions, and first-class handlers. Expressions add
call-by-value application and binding, checked calls, performed actions,
and handler application. A handler packages action arms as a value. Its
arm-local \(\kw{resume}\) and \(\kw{finish}\) forms are
checked only by the affine arm judgment in \figref{fig:handler-typing}; we
write \(u\) when an expression is checked in that mode.

\begin{figure}
\[
\begin{array}{@{}c@{}}
\begin{gathered}
x \in \msf{Var}
\qquad p \in \msf{Primitives}
\qquad c \in \msf{CallableName}
\\
a_0 \in \msf{ActionLabel}
\end{gathered}
\\[-0.3ex]
\rule{0.96\linewidth}{0.35pt}
\\[-0.2ex]
\begin{array}{llcl}
\textit{handler arms}
& r & ::= &
  a_0(\overline{x}) \Rightarrow e
\\[1mm]
\textit{handler values}
& h & ::= & \kw{handler}\ \{\overline{r}\}
\\[1mm]
\textit{value}
& v & ::= & x
      \mid p
      \mid \langle \ell_i = v_i \rangle_{i \in I}
      \mid \lambda x.e
      \mid h
\\[2mm]
\textit{expr}
& e & ::= & v
      \mid e_1\,e_2
      \mid \kw{let}\ x=e_1\ \kw{in}\ e_2
\\
& & \mid & \kw{if}\ e\ \kw{then}\ e_1\ \kw{else}\ e_2
      \mid \kw{call}\ c(\overline{e})
      \mid \kw{perform}\ a_0(\overline{v})
\\
& & \mid & \kw{handle}\ e\ \kw{with}\ e_h
\\
& & \mid & \kw{resume}\ e
      \mid \kw{finish}\ e
\end{array}
\end{array}
\]
\caption{Selected internal syntax of \CoreEtas. The figure lists
metavariables, handler arms and values, ordinary values, and expressions.
Named calls enter checked callable boundaries, performed actions request
typed authority, and \(\kw{resume}\)/\(\kw{finish}\) are accepted only in
handler-arm mode.}
\Description{Grammar for primitive values, checked callable names,
action labels, handler values, handler arms, and expressions in
Core Etas.}
\label{fig:core-syntax}
\end{figure}

Source declarations are not core expressions. Flows, agent methods,
tools, and standard-library wrappers elaborate to checked callable
descriptors \(c\), with summaries
\(\Xi(c)=\callty{\overline{\tau_i}}{\tau_o}{\epsilon}{A}\) and unary
implementations obtained through \(\msf{body}(c)\). Named entry calls use
\(\kw{call}\ c(\overline{e})\), while local higher-order code uses ordinary
application; multi-argument functions elaborate by nesting. Specs likewise
remain in the declaration layer: type and callable conformance contribute
evidence to \(\Xi\), whereas trace specs compile to the monitor context
\(\Pi\). They are never runtime redexes. A core program is therefore an
entry expression \(e_0\) together with \(\Xi\), \(\Pi\), and the meta-level
body lookup; agent and tool metadata is observed through callable and action
signatures rather than additional expression forms.

An elaborated action label \(a_0\) is an opaque authority name whose static
resource arguments have already been resolved. Its signature in \(\Xi\)
determines argument and result types, the escaping row, and a requested-trace
template. We write \(a=a_0(\overline{v})\) for a concrete instance.
Inference, tool use, memory access, approval, and host I/O are standard
action signatures rather than additional core productions; for example,
\(\kw{perform}\ \etasinst{infer}{T}(p)\) becomes
\(\kw{perform}\ \msf{Agentic.infer}\langle g,T\rangle(p)\), where \(g\)
identifies the enclosing agent method.

Finally, \mtt{for} and \mtt{while} are surface derived forms that elaborate
to recursive checked callables. Loop-resource accounting is an orthogonal
interpreter facility and is not part of \CoreEtas. The same normalization
preserves affine control in handler arms. Other surface composition and
handler notations lower compositionally to calls, lets, \(\kw{perform}\),
and \(\kw{handle}\ e\ \kw{with}\ e_h\).

\subsection{Actions, Effects, and Traces}
\label{sec:core-actions-effects-traces}

An action is a concrete runtime request. An effect is a static pattern
that over-approximates actions that may escape their local handlers. For
example, a concrete action may be abstracted by a path-patterned effect:
\[
  \msf{ProjectWorkspace.write}(\mtt{"reports/a.md"},v)
  \preceq
  \msf{ProjectWorkspace.write}\langle\mtt{"reports/**"}\rangle.
\]
Effects are therefore not permissions by themselves; they are static
escaping summaries that must be checked before an action commits through
the default runtime dispatcher. Requested-action trace abstractions are
separate: they record that an action was requested even when a handler
turns it into a mock result, dry run, approval pause, or typed denial.

A trace \(\tau \in \Traces\) is a sequence of concrete events; we write
\(\tau \cdot \eta\) for appending \(\eta\). Events distinguish an action
request from a handled outcome, an external commit, or a denial, whose phase
is \(j\in\{\msf{request},\msf{commit}\}\) and whose cause is
\(\xi\in\msf{Denial}\). Thus a handler may interpret a request without
committing its external side effect: the request remains visible to policy
and audit, whereas a commit records that the default runtime implementation
performed the operation. Figure~\ref{fig:dynamic-evaluation} defines the
concrete event domain alongside the rules that produce it.

\subsection{Design Restrictions}
\label{sec:core-restrictions}

\CoreEtas{} imposes three restrictions that distinguish it from general
algebraic-effect calculi.

\begin{enumerate}
\item Agent inference operations are nondeterministic action points.
Calls to an agent method are ordinary checked calls, but each
\mtt{perform infer} inside the method cannot be silently duplicated,
erased, or reordered unless a trace-equivalence argument justifies the
transformation for the relevant request and commit events.

\item Handlers do not grant authority. A handler can interpret an action
request and return a value, but it cannot create a commit event for a
side effect unless the active effect boundary, policies, deployment
grants, and sandbox constraints permit that commit.

\item Tools are authority boundaries for model-callable operations. A
tool body may perform actions covered by its summary, but an agent may
request only tools explicitly exposed in its descriptor.
\end{enumerate}

These restrictions are pragmatic. They sacrifice some expressiveness of
general handlers \cite{plotkin2013handling,lindley2017dobedobedo} in
exchange for a clearer security and recovery story for agent runtimes.

\section{Static Semantics}
\label{sec:static-semantics}

The static semantics has two responsibilities. It assigns ordinary types
to values, and it computes conservative summaries of the actions a
computation may request. The key difference from a conventional
row-typed effect account is that the judgment carries both conformance
artifacts and a compiled trace-monitor context. The escaping row records
effects that remain obligations for the caller, while a persistent
requested-action trace abstraction records actions that the computation
may request even if a handler interprets them locally. The conformance
artifacts and declaration data are collected in a fixed static signature \(\Xi\),
while the monitor environment \(\Pi\) records the trace specs that have
been compiled to automata and are active at the current program point. Policy
checking is therefore one branch of spec conformance, not a separate
verifier after type checking. This
design follows the long-standing distinction between
values and computations in effect systems
\cite{leijen2014koka,leijen2017typedirected}, but changes what the
effect judgment remembers for agent-runtime authority and audit.

\subsection{Types}
\label{sec:types}

\figref{fig:static-types} collects the value types, schemes, contexts, and
global signature of the core calculus.
\begin{figure}
\small
\[
\begin{array}{@{}c@{}}
\begin{gathered}
X \in \msf{TyVar}
\quad
Y \in \msf{SpecName}
\quad
D \in \msf{DataCon}
\quad
\ell \in \msf{Label}
\quad
I \in \msf{FinSet}
\\
x \in \msf{Var}
\quad
c \in \msf{CallableName}
\quad
a_0 \in \msf{ActionLabel}
\quad
a \in \msf{ActionInst}
\quad
a^\sharp \in \AbsActions
\\
S \in \msf{Spec}
\quad
\kappa \in \msf{StaticKind}
\quad
N \in \TraceSpecAlg
\quad
\epsilon \in \msf{EffRow}
\quad
A \in \TraceAbs
\\
\chi \in \msf{ConfSubj}
\quad
\zeta \in \msf{ConfArt}
\end{gathered}
\\[-0.5mm]
\rule{0.96\columnwidth}{0.35pt}
\\[-1mm]
\begin{array}{llcl}
\textit{Types}
& \tau & ::= & X \mid D\,\overline{\tau}
      \mid \TyUnit \mid \TyBool \mid \TyNum \mid \TyString
      \mid \langle \ell_i : \tau_i \rangle_{i \in I}
      \mid \funty{\tau_1}{\tau_2}{\epsilon}{A}
\\
& & \mid & \exists X.\,\tuple{\tau,\Wit{X \sim S}}
      \mid \handty{\epsilon_h}{\epsilon_r}{\tau}{A_h}
      \mid \msf{never}.
\\[1mm]
\textit{type schemes}
& \sigma & ::= & \tau
      \mid \forall X.\,\sigma
      \mid \forall X \sim S.\,\sigma
\\[1mm]
\textit{scheme instantiation}
& & & \Xi \vdash \sigma \leadsto_{\msf{inst}} \tau
\end{array}
\\[-0.5mm]
\rule{0.96\columnwidth}{0.35pt}
\\[-1mm]
\begin{array}{llcl}
\textit{term contexts}
& \Gamma & ::= & \emptyset \mid \Gamma, x:\sigma
\\[1mm]
\textit{static signatures}
& \Xi & ::= & \tuple{\Sigma_a,\Sigma_c,\Sigma_s,\Omega}
\\[1mm]
\textit{action signatures}
& \Sigma_a & ::= & \emptyset
      \mid \Sigma_a,
        a_0(\overline{x:\tau_i})
        \mapsto \tuple{\tau_o,\epsilon,A}
\\[1mm]
\textit{callable table}
& \Sigma_c & ::= & \emptyset
      \mid \Sigma_c, c \mapsto
        \callty{\overline{\tau_i}}{\tau_o}{\epsilon_c}{A_c}
\\[1mm]
\textit{spec table}
& \Sigma_s & ::= & \emptyset
      \mid \Sigma_s, Y \mapsto S:\kappa
\\[1mm]
\textit{evidence table}
& \Omega & ::= & \emptyset
      \mid \Omega, \chi \sim S \mapsto \zeta
      \mid \Omega, S_1 \preceq S_2
\\[1mm]
\textit{spec contexts}
& \Delta & ::= & \emptyset \mid \Delta, x:\kappa
\\[1mm]
\textit{monitor contexts}
& \Pi & ::= & \emptyset
      \mid \Pi \otimes \compile(N).
\end{array}
\end{array}
\]
\caption{Static ingredients of \CoreEtas: metavariables, value and handler
types, schemes, and typing environments. Function types carry latent
escaping effects and requested traces; handler types record handled
effects, arm effects, answer type, and arm trace. \(\Xi\), \(\Gamma\),
\(\Delta\), and \(\Pi\) hold signatures, terms, spec parameters, and
active monitors.}
\Description{Core static domains, value and handler types, polymorphic
and constrained type schemes, and the typing environments used by the
static semantics.}
\label{fig:static-types}
\label{fig:static-environments}
\end{figure}

Typing is bidirectional: synthesis infers the value type, whereas checking
consumes an expected type.
\[
  \typejp{\Gamma}{\Pi}{e}{\tau}{\epsilon}{A}{R}
  \qquad
  \checkjp{\Gamma}{\Pi}{e}{\tau}{\epsilon}{A}{R}
\]
Both judgments are indexed by the global signature \(\Xi\), term context
\(\Gamma\), and active monitors \(\Pi\), and return an escaping row
\(\epsilon\), requested-action abstraction \(A\), and residual checks
\(R\). The signature stores declarations and type/callable conformance
artifacts; trace conformance instead extends \(\Pi\). Core functions are
unary and carry \(\epsilon\) and \(A\); source arity elaborates to nesting,
whereas checked boundaries retain signatures
\(\callty{\overline{\tau_i}}{\tau_o}{\epsilon}{A}\) in \(\Xi\). Records and
data constructors provide the remaining value structure. Prompt, trust,
memory, and result notions are prelude types or specs, not additional core
type forms. Constrained schemes \(\forall X\sim S.\,\sigma\) instantiate
only with conformance artifacts from \(\Xi\).

In \figref{fig:typing-rules}, calls and actions resolve signatures in
\(\Xi\), accumulate escaping rows, and discharge requested traces against
\(\Pi\). \textsc{T-Action} instantiates the selected \(\Sigma_a\) entry with
its arguments. The same rule handles
\(\msf{Agentic.infer}\langle g,O\rangle\) because its signature includes
\(\msf{tools}(g)\). A first-class
handler has type \(\handty{\epsilon_h}{\epsilon_r}{\tau_r}{A_h}\), recording
handled and produced effects, its answer type, and arm trace. Rule
\textsc{T-HandleWith} supplies the contextual answer type and may remove
\(\epsilon_h\), while \(\msf{handle}_{\epsilon_h}(A)\) preserves requests
and adds handled outcomes. Standalone resume-only handlers are
answer-polymorphic; any \(\kw{finish}\) fixes the answer type. Formally,
\(\msf{gen}_{\overline{u_i}}\):
\[
\msf{gen}_{\overline{u_i}}
  (\handty{\epsilon_h}{\epsilon_r}{\tau_r}{A_h})
=
\begin{cases}
\forall X.\,\handty{\epsilon_h}{\epsilon_r}{X}{A_h}
  & \text{if } \msf{nofinish}(\overline{u_i})
    \text{ and } X \text{ is fresh},\\
\handty{\epsilon_h}{\epsilon_r}{\tau_r}{A_h}
  & \text{otherwise.}
\end{cases}
\]

Monitor discharge \(\Xi;\Pi\vdash A\leadsto R\) proves an abstract trace,
rejects it, or returns checks for runtime-only facts. \textsc{T-TraceConform}
compiles trace specs into \(\Pi\); type and callable specs use the conformance
rules of \figref{fig:spec-conformance}. Handler arms reuse ordinary expression
typing with terminal \(\kw{resume}\) and \(\kw{finish}\); the index
\(\upsilon\in\{0,1\}\) enforces at most one resume per path. Loops and their
affine validation elaborate before the core judgment.

\begin{figure*}[!t]
\scriptsize
\newcommand{\tyrulelabel}[1]{%
  \text{\begingroup\setlength{\fboxsep}{0.7pt}%
  \fbox{\scriptsize\textsc{#1}}\endgroup}}
\newcommand{\tyrule}[2]{%
  \begin{array}{@{}l@{}}
  \tyrulelabel{#1}\\[-0.35ex]
  #2
  \end{array}}
\setlength{\jot}{0pt}
\[
\begin{gathered}
\tyrule{T-Var}{
  \inferrule*
    {x:\sigma \in \Gamma
     \qquad
     \Xi \vdash \sigma \leadsto_{\msf{inst}} \tau}
    {\typejp{\Gamma}{\Pi}{x}{\tau}{\emptyset}{\emptyset}{\emptyset}}
}
\qquad
\tyrule{T-Val}{
  \inferrule*
    {\msf{data}(v)
     \qquad
     \Gamma \vdash v : \tau}
    {\typejp{\Gamma}{\Pi}{v}{\tau}{\emptyset}{\emptyset}{\emptyset}}
}
\qquad
\tyrule{T-Check}{
  \inferrule*
    {\typejp{\Gamma}{\Pi}{e}{\tau'}{\epsilon}{A}{R}
     \qquad
     \tau' \equiv \tau}
    {\checkjp{\Gamma}{\Pi}{e}{\tau}{\epsilon}{A}{R}}
}
\\[1mm]
\tyrule{T-Abs}{
  \inferrule*
    {\checkjp{\Gamma,x:\tau_1}{\emptyset}{e}
       {\tau_2}{\epsilon}{A}{\emptyset}}
    {\checkjp{\Gamma}{\Pi}{\lambda x.e}
      {\funty{\tau_1}{\tau_2}{\epsilon}{A}}
      {\emptyset}{\emptyset}{\emptyset}}
}
\\[1mm]
\tyrule{T-Call}{
  \inferrule*
    {\Xi(c)=\callty{\overline{\tau_i}}{\tau_o}{\epsilon_c}{A_c}
     \\
     \forall i.\ \checkjp{\Gamma}{\Pi}{e_i}{\tau_i}
       {\epsilon_i}{A_i}{R_i}
     \\
     \Xi;\Pi \vdash A_c \leadsto R_c}
    {\typejp{\Gamma}{\Pi}{\kw{call}\ c(\overline{e_i})}{\tau_o}
      {(\bigsqcup_i \epsilon_i) \effJoin \epsilon_c}
      {A_1 \mathbin{;} \cdots \mathbin{;} A_n \mathbin{;} A_c}
      {(\bigcup_i R_i) \cup R_c}}
}
\\[1mm]
\tyrule{T-Action}{
  \inferrule*
    {\Xi(a_0(\overline{x_i:\tau_i}))
       =\tuple{\tau_a,\epsilon_a,A_0}
     \qquad
     \forall i.\ \Gamma \vdash v_i:\tau_i
     \\
     A_a=A_0[\overline{v_i/x_i}]
     \qquad
     \Xi;\Pi \vdash A_a \leadsto R_a}
    {\typejp{\Gamma}{\Pi}
      {\kw{perform}\ a_0(\overline{v_i})}{\tau_a}
      {\epsilon_a}
      {A_a}
      {R_a}}
}
\\[1mm]
\tyrule{T-App}{
  \inferrule*
    {\typejp{\Gamma}{\Pi}{e_f}
     {(\funty{\tau_1}{\tau_2}{\epsilon_f'}{A_f'})}
       {\epsilon_f}{A_f}{R_f}
     \\
     \checkjp{\Gamma}{\Pi}{e}{\tau_1}{\epsilon_e}{A_e}{R_e}
     \\
     \Xi;\Pi \vdash A_f' \leadsto R_f'}
    {\typejp{\Gamma}{\Pi}{e_f\,e}{\tau_2}
      {\epsilon_f \effJoin \epsilon_e \effJoin \epsilon_f'}
      {A_f \mathbin{;} A_e \mathbin{;} A_f'}
      {R_f \cup R_e \cup R_f'}}
}
\\[1mm]
\tyrule{T-Let}{
  \inferrule*
    {\typejp{\Gamma}{\Pi}{e_1}{\tau_1}{\epsilon_1}{A_1}{R_1}
     \qquad
     \typejp{\Gamma,x:\tau_1}{\Pi}{e_2}{\tau_2}{\epsilon_2}{A_2}{R_2}}
    {\typejp{\Gamma}{\Pi}{\kw{let}\ x=e_1\ \kw{in}\ e_2}{\tau_2}
      {\epsilon_1 \effJoin \epsilon_2}
      {A_1 \mathbin{;} A_2}
      {R_1 \cup R_2}}
}
\\[1mm]
\tyrule{T-LetHandler}{
  \inferrule*
    {\Xi;\Gamma;\Pi \vdash h \Rightarrow \sigma
       \mathbin{!} \epsilon_1 \mathbin{\triangleright} A_1 \dashv R_1
     \qquad
     \typejp{\Gamma,x:\sigma}{\Pi}{e_2}{\tau_2}
       {\epsilon_2}{A_2}{R_2}}
    {\typejp{\Gamma}{\Pi}{\kw{let}\ x=h\ \kw{in}\ e_2}{\tau_2}
      {\epsilon_1 \effJoin \epsilon_2}
      {A_1 \mathbin{;} A_2}
      {R_1 \cup R_2}}
}
\\[1mm]
\tyrule{T-If}{
  \inferrule*
    {\checkjp{\Gamma}{\Pi}{e}{\TyBool}{\epsilon_0}{A_0}{R_0}
     \qquad
     \typejp{\Gamma}{\Pi}{e_1}{\tau}{\epsilon_1}{A_1}{R_1}
     \qquad
     \typejp{\Gamma}{\Pi}{e_2}{\tau}{\epsilon_2}{A_2}{R_2}}
    {\typejp{\Gamma}{\Pi}
      {\kw{if}\ e\ \kw{then}\ e_1\ \kw{else}\ e_2}{\tau}
      {\epsilon_0 \effJoin \epsilon_1 \effJoin \epsilon_2}
      {A_0 \mathbin{;} (A_1 \sqcup A_2)}
      {R_0 \cup R_1 \cup R_2}}
}
\\[1mm]
\tyrule{T-TraceConform}{
  \inferrule*
    {\Xi;\Delta \vdash S 
     \qquad
     \Xi \vdash S \Downarrow_{\msf{tr}} N \in \TraceSpecAlg
     \qquad
     \Pi_S = \compile_{\Xi}(N)
     \\
     \typejp{\Gamma}{\Pi \otimes \Pi_S}{e}{\tau}{\epsilon}{A}{R}}
    {\typejp{\Gamma}{\Pi}{e \sim S}{\tau}{\epsilon}{A}{R}}
}
\end{gathered}
\]
\caption{Core bidirectional typing with spec conformance. Each derivation
returns a value type, escaping effects, requested-action abstraction, and
residual checks. Calls and applications sequence latent summaries; actions
instantiate signatures and discharge requests; conditionals join traces;
trace conformance extends the monitor context. The split between
\(\epsilon\) and \(A\) preserves handled requests. Boxed labels name rules.}
\Description{Bidirectional typing rules for variables, values, functions,
application, let, conditionals, checked calls, performed actions, and
trace-spec constraints.}
\label{fig:typing-rules}
\end{figure*}

\begin{figure*}[!t]
\scriptsize
\newcommand{\htrulelabel}[1]{%
  \text{\begingroup\setlength{\fboxsep}{0.7pt}%
  \fbox{\scriptsize\textsc{#1}}\endgroup}}
\newcommand{\htrule}[2]{%
  \begin{array}{@{}l@{}}
  \htrulelabel{#1}\\[-0.35ex]
  #2
  \end{array}}
\setlength{\jot}{0pt}
\[
\begin{gathered}
\htrule{T-HandleWith}{
  \inferrule*
    {\typejp{\Gamma}{\Pi}{e}{\tau}{\epsilon}{A}{R}
     \\
     \checkjp{\Gamma}{\Pi}{e_h}
       {\handty{\epsilon_h}{\epsilon_r}{\tau}{A_h}}
       {\epsilon_g}{A_g}{R_g}
     \\
     \epsilon_h \subseteq \epsilon
     \qquad
     A_b=\msf{handle}_{\epsilon_h}(A) \sqcup A_h
     \qquad
     \Xi;\Pi \vdash A_b \leadsto R_b}
    {\typejp{\Gamma}{\Pi}{\kw{handle}\ e\ \kw{with}\ e_h}{\tau}
      {\epsilon_g \effJoin (\epsilon - \epsilon_h) \effJoin \epsilon_r}
      {A_g \mathbin{;} A_b}
      {R_g \cup R \cup R_b}}
}
\\[1mm]
\htrule{T-Handler}{
  \inferrule*
    {\msf{ans}_{\Xi,\Gamma,\Pi}(\overline{u_i})=\tau_r
     \\
     \forall i.\ 
       \Xi(a_i^0(\overline{x_i:\tau_i}))=
       \tuple{\tau_i^r,\epsilon_i,A_i^0}
     \\
     \forall i.\
       \Xi;\Gamma,\overline{x_i:\tau_i};\Pi
       \vdash u_i \Leftarrow \tau_i^r \Rightarrow \tau_r
       \mathbin{!} \epsilon_i' \mathbin{\triangleright} A_i
       \dashv R_i;\upsilon_i
     \\
     \forall i.\ \upsilon_i \le 1
     \\
     \epsilon_h=\{\epsilon_i\}_i
     \qquad
     \epsilon_r=\bigsqcup_i \epsilon_i'
     \qquad
     A_h=\bigsqcup_i A_i
     \\
     \sigma=
       \msf{gen}_{\overline{u_i}}
         (\handty{\epsilon_h}{\epsilon_r}{\tau_r}{A_h})}
    {\Xi;\Gamma;\Pi
      \vdash {\kw{handler}\ \{a_i^0(\overline{x_i}) \Rightarrow u_i\}_{i\in I}}
      \Rightarrow \sigma
      \mathbin{!} \emptyset \mathbin{\triangleright} \emptyset
      \dashv \bigcup_i R_i}
}
\\[1mm]
\htrule{T-HandlerCheck}{
  \inferrule*
    {\Xi;\Gamma;\Pi
       \vdash h \Rightarrow \sigma
       \mathbin{!} \epsilon \mathbin{\triangleright} A \dashv R
     \\
     \Xi \vdash \sigma \leadsto_{\msf{inst}}
       \handty{\epsilon_h}{\epsilon_r}{\tau_r}{A_h}}
    {\checkjp{\Gamma}{\Pi}{h}
      {\handty{\epsilon_h}{\epsilon_r}{\tau_r}{A_h}}
      {\epsilon}{A}{R}}
}
\\[1mm]
\htrule{T-Resume}{
  \inferrule*
    {\tau_a \neq \msf{never}
     \\
     \checkjp{\Gamma}{\Pi}{e}{\tau_a}{\epsilon}{A}{R}}
    {\Xi;\Gamma;\Pi
      \vdash \kw{resume}\ e \Leftarrow \tau_a \Rightarrow \tau_r
      \mathbin{!} \epsilon \mathbin{\triangleright} A \dashv R;1}
}
\qquad
\htrule{T-Finish}{
  \inferrule*
    {\checkjp{\Gamma}{\Pi}{e}{\tau_r}{\epsilon}{A}{R}}
    {\Xi;\Gamma;\Pi
      \vdash \kw{finish}\ e \Leftarrow \tau_a \Rightarrow \tau_r
      \mathbin{!} \epsilon \mathbin{\triangleright} A \dashv R;0}
}
\\[1mm]
\htrule{T-ArmNever}{
  \inferrule*
    {\checkjp{\Gamma}{\Pi}{e}{\msf{never}}{\epsilon}{A}{R}}
    {\Xi;\Gamma;\Pi
      \vdash e \Leftarrow \tau_a \Rightarrow \tau_r
      \mathbin{!} \epsilon \mathbin{\triangleright} A \dashv R;0}
}
\qquad
\htrule{T-ArmLet}{
  \inferrule*
    {\typejp{\Gamma}{\Pi}{e_1}{\tau_x}{\epsilon_1}{A_1}{R_1}
     \qquad
     \Xi;\Gamma,x:\tau_x;\Pi
       \vdash u \Leftarrow \tau_a \Rightarrow \tau_r
       \mathbin{!} \epsilon_2 \mathbin{\triangleright} A_2
       \dashv R_2;\upsilon}
    {\Xi;\Gamma;\Pi
      \vdash \kw{let}\ x=e_1\ \kw{in}\ u
      \Leftarrow \tau_a \Rightarrow \tau_r
      \mathbin{!} \epsilon_1 \effJoin \epsilon_2
      \mathbin{\triangleright} A_1 \mathbin{;} A_2
      \dashv R_1 \cup R_2;\upsilon}
}
\\[1mm]
\htrule{T-ArmIf}{
  \inferrule*
     {\checkjp{\Gamma}{\Pi}{e}{\TyBool}{\epsilon_0}{A_0}{R_0}
      \\
     \Xi;\Gamma;\Pi
       \vdash u_1 \Leftarrow \tau_a \Rightarrow \tau_r
       \mathbin{!} \epsilon_1 \mathbin{\triangleright} A_1
       \dashv R_1;\upsilon_1
      \\
     \Xi;\Gamma;\Pi
       \vdash u_2 \Leftarrow \tau_a \Rightarrow \tau_r
       \mathbin{!} \epsilon_2 \mathbin{\triangleright} A_2
       \dashv R_2;\upsilon_2
      \\
      \upsilon=\upsilon_1\sqcup\upsilon_2}
    {\Xi;\Gamma;\Pi
      \vdash \kw{if}\ e\ \kw{then}\ u_1\ \kw{else}\ u_2
      \Leftarrow \tau_a \Rightarrow \tau_r
      \mathbin{!} \epsilon_0 \effJoin \epsilon_1 \effJoin \epsilon_2
      \mathbin{\triangleright} A_0 \mathbin{;} (A_1 \sqcup A_2)
      \dashv R_0 \cup R_1 \cup R_2;\upsilon}
}
\end{gathered}
\]
\caption{Typing first-class handlers and \(\kw{handle}\)-\(\kw{with}\)
expressions. \textsc{T-HandleWith} removes handled effects from the
escaping row while preserving requests in \(A\); \textsc{T-Handler} checks
arms against a common answer type and synthesizes reusable schemes.
Resume-only handlers remain answer-polymorphic, and arm control permits at
most one resume per path.}
\Description{Rules for handle-with expressions, synthesizing and instantiating
handler values, resume, finish, non-returning arms, and structured
branching in handler arms.}
\label{fig:handler-typing}
\end{figure*}

\subsection{Effects}
\label{sec:effects}

An effect row \(\epsilon\) is a finite set of action patterns, with row
variables during inference. Surface syntax may use namespace notation,
but the core treats rows extensionally:
\[
\begin{array}{llcl}
\textit{action patterns}
& \pi & ::= & \mu\langle\overline{\tau}\rangle
\\
\textit{effect rows}
& \epsilon & ::= & \emptyset \mid \pi \mid \epsilon \cup \epsilon.
\end{array}
\]
Here \(\mu\) is a core action name and \(\tau\) is a static type argument.
Resource regions, accounts, and other selectors are nominal marker types
constrained by type specs; containment is derived from their evidence. For
example,
\[
  \msf{Memory.read}\langle\msf{ProjectMemory}\rangle
  \supseteq
  \msf{Memory.read}\langle\msf{ProjectMemory.Reports}\rangle,
\]
but not conversely. Read authority also does not imply write authority.

\begin{figure}
\small
\[
\begin{array}{llcl}
\textit{trace abstractions}
& A & ::= & \emptyset
      \mid \eta^\sharp
      \mid A_1 \mathbin{;} A_2
      \mid A_1 \sqcup A_2
      \mid A^{\star}
\\[1mm]
\textit{abstract events}
& \eta^\sharp & ::= & \reqevt{a^\sharp}
      \mid \handleevt{a^\sharp}{h}
      \mid \commitevt{a^\sharp}
      \mid \denyevt{j(a^\sharp)}{\xi}
\\[1mm]
\textit{residual obligations}
& R & ::= & \emptyset
      \mid \msf{check}(\eta^\sharp,\varphi)
      \mid R_1 \cup R_2 .
\end{array}
\]
\caption{Static requested-action traces and residual obligations. Trace
abstractions sequence, join, and iterate abstract request, handled, commit,
and denial events. Monitor discharge returns \(R\) for dynamic predicates
that remain runtime checks, so handled actions stay visible even without
escaping effects or commits.}
\Description{The static trace language used by typing rules to summarize
request, handled, commit, and denial events, together with residual runtime
checks that remain after monitor discharge.}
\label{fig:trace-abstractions}
\end{figure}

\paragraph{Escaping effects and requested actions.}
\AgentLang{} tracks two behavioral summaries:
\[
  \typejp{\Gamma}{\Pi}{e}{\tau}{\epsilon}{A}{R}.
\]
The \emph{escaping effect row} \(\epsilon\) is the public type-level
upper bound in flow and tool types: it records effects that remain for
the caller to handle or mediate. The \emph{requested-action trace
abstraction} \(A\) records which typed actions the computation may
request, and in which abstract order, regardless of whether each request
is handled, denied, or committed. The monitor environment \(\Pi\)
constrains \(A\) during typing, and \(R\) records residual checks for
facts that cannot be discharged statically. Trace specs and runtime audit
therefore use \(A\), not just \(\epsilon\).

The distinction is needed for agent inference and handlers. A public flow
type need not expose provider-internal details, but the runtime still
needs trace evidence for
\(\msf{Agentic.infer}\langle g,O\rangle\), model-selected tools, schema
validation, and related events. A dry-run email handler may make the
effect non-escaping while leaving its typed request auditable. Handling removes a
caller obligation; it does not remove trace evidence.

\paragraph{Declared rows.}
For a declaration \(e : \tau ! \epsilon_d\), the checker requires the
inferred row \(\epsilon_i\) to be covered by the declared row:
\[
  \epsilon_i \sqsubseteq \epsilon_d.
\]
Rows are upper bounds. A missing effect is an error; an unused declared
effect is usually a warning unless strict mode requires minimal rows.

Declared rows are not complete behavior summaries. They constrain what
may escape, while the typing judgment checks the inferred \(A_i\) against
monitors from active trace specs and returns residual obligations. Thus a
declaration can honestly say
\[
  \mtt{DryRunEmail : EmailRequest -> EmailResult ![]}
\]
while the compiler still records the possible
email request and checks it against the active trace specs.

\subsection{Spec Conformance}
\label{sec:spec-conformance}
\label{sec:trace-specs}

Etas specs are kinded compile-time constraints consumed by the \(\sim\)
conformance relation. A \(\msf{type}\) spec constrains types, including
resource-marker types; a \(\msf{callable}\) spec constrains callable shape
and optional effect bounds, and a \(\msf{trace}\) spec constrains requested
actions.
The first two produce static artifacts in \(\Xi\); trace specs compile to
monitors in \(\Pi\).

\figref{fig:spec-calculus} gives the terminating spec calculus. Its single
kind grammar \(\kappa\) classifies conformance specs, effect and action
parameters, and higher-order spec functions; complete conformance specs have
result kind \(\msf{type}\), \(\msf{callable}\), or \(\msf{trace}\). Lambda
abstraction and application make specs reusable; for example,
\(\etasinst{ApprovalBefore}{P:\msf{Action}}\) elaborates to
\[
  \lambda P^{\msf{action}}.\,
  {+}\msf{Approval.request} \mathbin{\&} {+}P
  \mathbin{\&}(\msf{Approval.request} \gg P).
\]
Applications beta-reduce during compilation; recursive and
runtime-produced specs are outside the calculus.

\begin{figure}
\scriptsize
\[
\begin{array}{llcl}
\multicolumn{4}{c}{
  P \in \msf{PatVar}
  \qquad
  \mu \in \msf{ActionName}}
\\[1mm]
\textit{static kinds}
& \kappa & ::= & \msf{type}
      \mid \msf{callable}
      \mid \msf{trace}
\\
& & \mid & \msf{effect}
      \mid \msf{action}
      \mid \kappa_1 \Rightarrow \kappa_2
\\[1mm]
\textit{action patterns}
& p,q & ::= & P
      \mid \mu\langle\overline{\tau}\rangle
      \mid \mu\langle\overline{X \sim S}\rangle
\\[1mm]
\textit{callable shapes}
& C & ::= & \overline{\tau_i} \Rightarrow \tau_o
      \mid \overline{\tau_i} \Rightarrow \tau_o \mathbin{!} \epsilon
\\[1mm]
\textit{conformance forms}
& (\chi,\zeta) & ::= & (\tau,w) \mid (c,\theta) \mid (A,R)
\\[1mm]
\textit{spec terms}
& S,T & ::= & Y
      \mid x^{\kappa}
      \mid \lambda x^{\kappa}.\,S
      \mid S\,T
\\
& & \mid & S_1 \mathbin{\&} S_2
      \mid S_1 \mathbin{|} S_2
      \mid {+}p
      \mid {-}p
\\
& & \mid & p \gg q
      \mid p \ll q
      \mid \msf{callable}(C)
\end{array}
\]
\rule{0.96\linewidth}{0.6pt}
\[
\begin{array}{c}
  \Xi;\Delta \vdash S : \kappa
  \qquad
  \Xi;\Delta \vdash p : \msf{action}
  \qquad
  \Xi \vdash S \Downarrow_{\msf{tr}} N \in \TraceSpecAlg
  \qquad
  \Xi \vdash S \Downarrow_{\msf{call}} C
\\[2mm]
  (\lambda x^{\kappa}.\,S)\ T \longrightarrow \subst{S}{x}{T}
  \qquad
  \Xi(Y)=S:\kappa \Longrightarrow Y \longrightarrow S
\\[1mm]
  p \ll q \longrightarrow q \gg p
\end{array}
\]
\caption{Terminating compile-time spec calculus. The figure defines kinds,
action patterns, callable shapes, conformance result forms, and spec terms,
then gives kinding and normalization. Type, callable, and trace specs
produce witnesses, callable artifacts, and residual checks. Applications
beta-reduce, named specs unfold, \(p\ll q\) normalizes to \(q\gg p\), and
recursion/runtime specs are excluded.}
\Description{Static kinds, type-indexed action patterns, callable shapes,
conformance forms, spec expressions, and
normalization judgments for callable and trace specs.}
\label{fig:spec-calculus}
\end{figure}

\figref{fig:spec-calculus} also pairs each conformance subject with its
artifact: types produce witnesses \((\tau,w)\), callables produce static
artifacts \((c,\theta)\), and requested-trace abstractions produce residual
checks \((A,R)\). We write the three forms uniformly as
\(\chi\sim S\leadsto\zeta\); \figref{fig:spec-conformance} gives their
kind-directed rules.

The rules are kind-directed. Type conformance resolves witnesses used by
constrained schemes \(\forall X\sim S.\,\sigma\), without treating
conformance as subtyping. Callable conformance normalizes \(S\) to an
input/output shape and, when present, checks the upper bound
\(\epsilon_c\sqsubseteq\epsilon_s\); omitting the row leaves it open,
whereas \(![]\) requires no escaping effects. Trace conformance normalizes
\(S\), compiles the result, and discharges \(A\) under the extended monitor
context, returning \(R\). \textsc{C-Entail} reuses an artifact
at a weaker spec. Evidence-indexed action matching reuses the same type-spec
witnesses: a bounded argument \(X\sim S\) matches \(\tau\) exactly when
\(\Xi\vdash\tau\sim S\leadsto w\) for some \(w\).

\begin{figure*}[t]
\scriptsize
\newcommand{\cfrulelabel}[1]{%
  \text{\begingroup\setlength{\fboxsep}{0.7pt}%
  \fbox{\scriptsize\textsc{#1}}\endgroup}}
\newcommand{\cfrule}[2]{%
  \begin{array}{@{}l@{}}
  \cfrulelabel{#1}\\[-0.35ex]
  #2
  \end{array}}
\setlength{\jot}{0pt}
\[
\begin{gathered}
\cfrule{C-Type}{
\inferrule*
  {\Xi(\tau \sim S)=w}
  {\Xi \vdash \tau \sim S \leadsto w}
}
\qquad
\cfrule{C-Entail}{
\inferrule*
  {\Xi \vdash \chi \sim S_1 \leadsto \zeta
   \\
   \Xi \vdash S_1 \preceq S_2}
  {\Xi \vdash \chi \sim S_2 \leadsto \zeta}
}
\\[1mm]
\cfrule{C-Callable}{
\inferrule*
  {\Xi(c)=\callty{\overline{\tau_i}}{\tau_o}{\epsilon_c}{A_c}
   \\
   \Xi \vdash S \Downarrow_{\msf{call}}
      (\overline{\tau_i} \Rightarrow \tau_o \mathbin{!} \epsilon_s)
   \\
   \epsilon_c \sqsubseteq \epsilon_s}
  {\Xi \vdash c \sim S \leadsto \theta}
}
\qquad
\cfrule{C-CallableOpen}{
\inferrule*
  {\Xi(c)=\callty{\overline{\tau_i}}{\tau_o}{\epsilon_c}{A_c}
   \\
   \Xi \vdash S \Downarrow_{\msf{call}}
      (\overline{\tau_i} \Rightarrow \tau_o)}
  {\Xi \vdash c \sim S \leadsto \theta}
}
\\[1mm]
\cfrule{C-Trace}{
\inferrule*
  {\Xi \vdash S \Downarrow_{\msf{tr}} N \in \TraceSpecAlg
   \\
   \Xi;\Pi \otimes \compile_{\Xi}(N) \vdash A \leadsto R}
  {\Xi;\Pi \vdash A \sim S \leadsto R}
}
\end{gathered}
\]
\caption{Kind-directed conformance from normalized specs to static
checking. Type specs resolve witnesses, callable specs check callable shape
and optional effect bounds, and trace specs compile to monitors for
requested-action abstractions. The resulting artifacts are \(w\),
\(\theta\), or residual checks \(R\).}
\Description{Conformance rules for type, callable, and trace specs,
including entailment and conformance artifacts.}
\label{fig:spec-conformance}
\end{figure*}

For trace specs, \(\Xi\vdash S\Downarrow_{\msf{tr}}N\) expands named
specs, beta-reduces applications, and computes the \TraceSpecAlgebra{}
normal form \(N\). The partial meaning function \(\nfsem{\cdot}\) in
\figref{fig:spec-algebra} maps a closed, well-kinded trace spec to either
an atom \(\tuple{L,D,B}\), containing allow patterns, deny patterns, and
before-obligations, or a disjunction of such atoms.

\FloatBarrier
\begin{figure}[t]
\[
\begin{array}{llcl}
\textit{normal forms}
& N & ::= & \tuple{L,D,B}
      \mid N_1 \oplus N_2
\\
\textit{allow/deny sets}
& L,D & \subseteq & \msf{ActionPattern}
\\
\textit{before obligations}
& B & \subseteq & \msf{ActionPattern} \times \msf{ActionPattern}
\end{array}
\]
\rule{0.96\linewidth}{0.8pt}
\[
\begin{array}{rcl}
\nfsem{\cdot} &:& \msf{TraceSpec} \rightharpoonup \TraceSpecAlg
\end{array}
\]
\[
\begin{array}{rclcrcl}
\nfsem{{+}p} &=& \tuple{\set{p},\emptyset,\emptyset}
&\quad&
\nfsem{{-}p} &=& \tuple{\emptyset,\set{p},\emptyset}
\\
\nfsem{p \gg q} &=& \tuple{\emptyset,\emptyset,\set{(p,q)}}
&&
\nfsem{S \mathbin{\&} T} &=& \nfsem{S} \otimes \nfsem{T}
\\
\nfsem{S \mathbin{|} T} &=& \nfsem{S} \oplus \nfsem{T}
&&
\nfsem{(\lambda x^{\kappa}.\,S)\ T} &=& \nfsem{\subst{S}{x}{T}}
\end{array}
\]
\[
\begin{array}{rcl}
  \tuple{L_1,D_1,B_1} \otimes \tuple{L_2,D_2,B_2}
&=&
  \tuple{L_1 \cup L_2,\ D_1 \cup D_2,\ B_1 \cup B_2}
\\
(N_1 \oplus N_2) \otimes N
&=&
  (N_1 \otimes N) \oplus (N_2 \otimes N)
\\
N \otimes (N_1 \oplus N_2)
&=&
  (N \otimes N_1) \oplus (N \otimes N_2).
\end{array}
\]
\[
\begin{array}{rcl}
\msf{decision}_{\tuple{L,D,B}}(a)
&=&
\begin{cases}
\msf{deny}  & \exists p \in D.\ \msf{match}_{\Xi}(p,a) \\
\msf{allow} & \exists p \in L.\ \msf{match}_{\Xi}(p,a) \\
\msf{deny}  & \text{otherwise}
\end{cases}
\\[4mm]
\tau \models p \gg q
&\iff&
\forall i.\ \msf{match}_{\Xi}(q,\tau_i)
\Rightarrow
\exists j<i.\ \msf{match}_{\Xi}(p,\tau_j).
\\[4mm]
\tau \models N_1 \otimes N_2
&\iff&
\tau \models N_1 \land \tau \models N_2
\\[4mm]
\tau \models N_1 \oplus N_2
&\iff&
\tau \models N_1 \lor \tau \models N_2.
\end{array}
\]
\caption{\TraceSpecAlgebra{} normal forms and meaning. Atoms collect allow
patterns, deny patterns, and before obligations; \(\oplus\) represents
disjunction. Normalization, composition, action decisions, and trace
satisfaction appear below the divider. Deny rules take precedence, unmatched
actions are denied, and \(p\gg q\) requires prior matching \(p\) events.}
\Description{The algebraic normal form for trace specs, the normalization
meaning function from trace specs to normal forms, conjunctive
composition, evidence-indexed allow-deny resolution, and temporal
satisfaction.}
\label{fig:spec-algebra}
\end{figure}

Deny takes precedence, unmatched actions are denied, and every event
matching the target of \(p\gg q\) must have an earlier matching \(p\).
The dual \(p\ll q\) normalizes to \(q\gg p\). Matching is indexed by
\(\Xi\), so bounded generic action patterns can reuse ordinary type-spec
witnesses. This common conformance boundary is what makes \mtt{spec} more
than a policy DSL: it controls type and callable polymorphism as well as
typed action traces.

\subsection{Policy Automata and Abstract Interpretation}
\label{sec:policy-automata}
\label{sec:abstract-interpretation}

Each trace-spec normal form \(N\) compiles to a finite monitor
\(M_N=\tuple{Q,q_0,\delta,\Bad}\), where \(q_0\) is initial,
\(\delta:Q\times\Events\to Q\), and \(\Bad\subseteq Q\) contains rejecting
states. Because events distinguish requests, handling, denials, and commits,
a request can be constrained independently of its interpretation or commit.
The active environment \(\Pi\) denotes their product \(M_\Pi\); any rejecting
component rejects the product, so adding a trace spec only narrows accepted
traces.

The checker discharges an abstract requested trace with the judgment
\(\Xi;\Pi\vdash A\leadsto R\), defined by abstract interpretation of \(A\)
over the states of
\(M_\Pi\) \cite{cousot1977abstract}. The monitor-state abstract domain is
\(\AbsDomain{\Pi}=\mathcal{P}(Q_\Pi)\), and abstract events range over
\(\AbsEvents\). Its concretization
\(\gamma_\Xi:\AbsEvents\to\mathcal{P}(\Events)\) uses type-spec evidence
from \(\Xi\). For \(Q^\sharp\in\AbsDomain{\Pi}\), the abstract
event transformer is
\[
  \asem{\eta^\sharp}_{\Xi,\Pi}(Q^\sharp)
  = \{\delta_\Pi(q,\eta)
       \mid q\in Q^\sharp,\ \eta\in\gamma_\Xi(\eta^\sharp)\}.
\]
The trace transformer \(\asem{A}_{\Xi,\Pi}\) lifts this definition
structurally: sequencing composes transformers, joins union successor states,
and \(A^\star\) computes a least fixed point over \(\AbsDomain{\Pi}\); named
boundaries reuse summaries from \(\Xi\).

For a successor set \(Q'\), \(Q'\cap\Bad_\Pi=\emptyset\) proves the
transition safe, while \(Q'\subseteq\Bad_\Pi\) leaves the judgment without a
derivation. In the remaining case, the checker emits
\(\msf{check}(\eta^\sharp,\varphi)\in R\). Thus an approval-before-email
monitor may prove the ordering while leaving account equality or approval
freshness residual. Every accepted but unproved transition is therefore an
explicit runtime obligation, as required by \secref{sec:soundness}.

\FloatBarrier

\section{Dynamic Semantics}
\label{sec:dynamic-semantics}

The dynamic semantics is a small-step semantics over configurations:
\[
  C = \tuple{e,H,\tau}.
\]
\(H\) is the handler stack and \(\tau\) is the trace prefix. The active
effect boundary \(\epsilon_b\) and the effective policy monitor \(\Pi\),
compiled from \TraceSpecAlgebra{} objects, are fixed parameters of the step relation
for the current checked entry point:
\[
  \epsilon_b;\Pi \vdash C \step C'.
\]
We leave these parameters implicit in the rules. They are not mutable
runtime state: \(\epsilon_b\) restricts commits at the current checked
boundary, and \(\Pi\) is advanced conceptually by replaying the trace
prefix \(\tau\).
Concrete deployment grants, tool exposure, and sandbox descriptors refine
the implementation of dispatch, but they are not separate components of
the core calculus.
We write \(H\cdot h\) for the stack obtained by pushing handler frame
\(h\) on top of \(H\).

\subsection{Expression Evaluation}
\label{sec:dynamic-expression-evaluation}

The small-step relation is call-by-value. Most rules apply under an
evaluation context; dedicated action and handler rules update \(H\) or
\(\tau\). \figref{fig:dynamic-evaluation} combines the context grammar,
core redexes, handler rules, and action-enforcement rules.
\begin{figure*}[!t]
\scriptsize
\newcommand{\dyrulelabel}[1]{%
  \text{\begingroup\setlength{\fboxsep}{0.7pt}%
  \fbox{\scriptsize\textsc{#1}}\endgroup}}
\newcommand{\dyrule}[2]{%
  \begin{array}{@{}l@{}}
  \dyrulelabel{#1}\\[-0.35ex]
  #2
  \end{array}}
\newcommand{\dycommitdenyresult}{%
  \langle
    \begin{gathered}
    \kw{perform}\ \msf{Error.raise}
      \langle\msf{PolicyDenied}\rangle(a),H,\\[-0.25ex]
    \tau\cdot\reqevt{a}\cdot
      \denyevt{\commitevt{a}}{\msf{PolicyDenied}}
    \end{gathered}
  \rangle}
\setlength{\jot}{0pt}
\[
\begin{array}{llcl}
\textit{trace events}
& \eta & ::= & \reqevt{a}
      \mid \handleevt{a}{h}
      \mid \commitevt{a}
      \mid \denyevt{j(a)}{\xi}
\\[1mm]
\textit{evaluation contexts}
& E & ::= & [\,]
      \mid E\,e
      \mid v\,E
      \mid \kw{let}\ x=E\ \kw{in}\ e
\\
& & \mid & \kw{if}\ E\ \kw{then}\ e_1\ \kw{else}\ e_2
      \mid \kw{call}\ c(\overline{v},E,\overline{e})
\\
& & \mid & \kw{handle}\ e\ \kw{with}\ E
      \mid \kw{scope}_{h}(E)
      \mid \msf{arm}_{h}(E)
\\
& & \mid & \kw{resume}\ E
      \mid \kw{finish}\ E
\\[1mm]
\textit{finish contexts}
& G & \subseteq & E
      \quad (G\text{ contains no } \kw{scope}_{h'}\text{ frame}).
\end{array}
\]
\vspace{-0.5ex}
\noindent\makebox[\linewidth]{\rule{0.96\linewidth}{0.35pt}}
\vspace{-0.5ex}
\[
\begin{gathered}
\dyrule{E-Ctx}{
  \inferrule*
    {\tuple{e,H,\tau}
     \step
     \tuple{e',H',\tau'}}
    {\tuple{E[e],H,\tau}
     \step
     \tuple{E[e'],H',\tau'}}
}
\qquad
\dyrule{E-Beta}{
  \inferrule*
    {\vphantom{\kw{true}}}
    {\tuple{(\lambda x.e)\,v,H,\tau}
     \step
     \tuple{\subst{e}{x}{v},H,\tau}}
}
\\[1mm]
\dyrule{E-Let}{
  \inferrule*
    {\vphantom{\kw{true}}}
    {\tuple{\kw{let}\ x=v\ \kw{in}\ e,H,\tau}
     \step
     \tuple{\subst{e}{x}{v},H,\tau}}
}
\qquad
\dyrule{E-IfTrue}{
  \inferrule*
    {\vphantom{\kw{true}}}
    {\tuple{\kw{if}\ \kw{true}\ \kw{then}\ e_1\ \kw{else}\ e_2,
            H,\tau}
     \step
     \tuple{e_1,H,\tau}}
}
\\[1mm]
\dyrule{E-IfFalse}{
  \inferrule*
    {\vphantom{\kw{true}}}
    {\tuple{\kw{if}\ \kw{false}\ \kw{then}\ e_1\ \kw{else}\ e_2,
            H,\tau}
     \step
     \tuple{e_2,H,\tau}}
}
\qquad
\dyrule{E-Call}{
  \inferrule*
    {\msf{body}(c)=v_f}
    {\tuple{\kw{call}\ c(v_1,\ldots,v_n),H,\tau}
     \step
     \tuple{v_f\,v_1\,\cdots\,v_n,H,\tau}}
}
\\[1mm]
\dyrule{E-HandleEnter}{
  \inferrule*
    {\vphantom{\kw{true}}}
    {\tuple{\kw{handle}\ e\ \kw{with}\ h,H,\tau}
     \step
     \tuple{\kw{scope}_{h}(e),H\cdot h,\tau}}
}
\qquad
\dyrule{E-HandleExit}{
  \inferrule*
    {\vphantom{\kw{true}}}
    {\tuple{\kw{scope}_{h}(v),H\cdot h,\tau}
     \step
     \tuple{v,H,\tau}}
}
\\[1mm]
\dyrule{E-FinishSkip}{
  \inferrule*
    {h \neq h'}
    {\tuple{\kw{scope}_{h'}(G[\kw{finish}_{h}\ v]),
            H\cdot h',\tau}
     \step
     \tuple{\kw{finish}_{h}\ v,H,\tau}}
}
\qquad
\dyrule{E-FinishReturn}{
  \inferrule*
    {\vphantom{\kw{true}}}
    {\tuple{\kw{scope}_{h}(G[\kw{finish}_{h}\ v]),
            H\cdot h,\tau}
     \step
     \tuple{v,H,\tau}}
}
\\[1mm]
\dyrule{E-Perform-Handle}{
  \inferrule*
    {q_r=\delta_\Pi(q_\tau,\reqevt{a})
     \qquad
     q_r\notin\Bad
     \qquad
     \msf{dispatchH}(H,a)=\tuple{h,u}}
    {\tuple{\kw{perform}\ a,H,\tau}
     \step
     \tuple{\msf{arm}_{h}(u),H,
       \tau\cdot\reqevt{a}\cdot\handleevt{a}{h}}}
}
\\[1mm]
\dyrule{E-Perform-Resume}{
  \inferrule*
    {\vphantom{\kw{true}}}
    {\tuple{\msf{arm}_{h}(\kw{resume}\ v),H,\tau}
     \step
     \tuple{v,H,\tau}}
}
\qquad
\dyrule{E-Perform-Finish}{
  \inferrule*
    {\vphantom{\kw{true}}}
    {\tuple{\msf{arm}_{h}(\kw{finish}\ v),H,\tau}
     \step
     \tuple{\kw{finish}_{h}\ v,H,\tau}}
}
\\[1mm]
\dyrule{E-Perform-Commit}{
  \inferrule*
    {q_r=\delta_\Pi(q_\tau,\reqevt{a})
     \qquad
     q_r\notin\Bad
     \\
     \msf{nohandler}(H,a)
     \qquad
     a\preceq\epsilon_b
     \\
     \delta_\Pi(q_r,\commitevt{a})\notin\Bad
     \qquad
     \msf{dispatchD}(a)=v}
    {\tuple{\kw{perform}\ a,H,\tau}
     \step
     \tuple{v,H,
       \tau\cdot\reqevt{a}\cdot\commitevt{a}}}
}
\\[1mm]
\dyrule{E-Request-Deny}{
  \inferrule*
    {\delta_\Pi(q_\tau,\reqevt{a}) \in \Bad}
    {\tuple{\kw{perform}\ a,H,\tau}
     \step
     \tuple{\kw{perform}\ \msf{Error.raise}
       \langle\msf{PolicyDenied}\rangle(a),H,
       \tau\cdot\denyevt{\reqevt{a}}{\msf{PolicyDenied}}}}
}
\\[1mm]
\dyrule{E-Commit-Deny}{
  \inferrule*
    {q_r=\delta_\Pi(q_\tau,\reqevt{a})
     \qquad
     q_r\notin\Bad
     \\
     \msf{nohandler}(H,a)
     \\
     a\npreceq\epsilon_b
     \ \lor\
     \delta_\Pi(q_r,\commitevt{a}) \in \Bad}
    {\tuple{\kw{perform}\ a,H,\tau}
     \step
     \dycommitdenyresult}
}
\end{gathered}
\]
\caption{Call-by-value dynamic semantics over
\(\tuple{e,H,\tau}\). The figure gives trace events and contexts, then
pure, call, handler, action-enforcement, and denial rules. Handled actions
record request and handled events; unhandled actions commit only after
request, boundary, and commit-monitor checks. Rejected requests or commits
append denial events and raise \(\msf{PolicyDenied}\), preserving the
request/handle/commit distinction.}
\Description{Concrete trace events, evaluation and finish contexts, followed
by small-step rules for pure reduction, checked calls, handler scopes, handled
and committed actions, and policy denial.}
\label{fig:dynamic-evaluation}
\label{fig:dynamic-action-enforcement}
\end{figure*}

The ordinary contexts evaluate the handler position of
\(\kw{handle}\ e\ \kw{with}\ e_h\), but they do not evaluate the handled
body before the dynamic handler scope is installed. The rules in
\secref{sec:dynamic-handlers} push a handler frame and introduce
the administrative delimiter \(\kw{scope}_{h}(e)\); evaluation then
continues inside that scoped body. Handler arms reuse the ordinary
small-step relation until they reach \(\kw{resume}\ v\) or
\(\kw{finish}\ v\), so arm-local lets, conditionals, and calls are
covered by the same rules as ordinary expressions. Source loops elaborate
to recursive checked callables and require no additional core redex.

No separate arm-evaluation relation is needed. Rule
\textsc{E-Perform-Handle} places the selected body in the administrative
frame \(\msf{arm}_{h}(e)\), where it takes ordinary small steps. Since
\(\kw{resume}\) and \(\kw{finish}\) are terminal in an arm, the two exit
rules consume them at that arm. The enclosing evaluation context is the
suspended single-shot continuation: resume replaces the original
\(\kw{perform}\) redex by its value, while finish creates the targeted form
\(\kw{finish}_{h}\ v\). \textsc{E-FinishSkip} discards a scope-free
finish context \(G\), including any intervening arm frames, and crosses a
non-target handler scope. \textsc{E-FinishReturn} discards the final \(G\)
and returns at the matching delimiter.
Calls to checked callables use \textsc{E-Call}. The meta-level lookup
\(\msf{body}(c)\) returns the unary function value associated with the
callable descriptor \(c\), whether that descriptor was generated from a
source flow, an agent method such as \mtt{Reviewer.run}, a tool wrapper,
or a standard-library binding. If \(c\) has multiple entry arguments,
the stored body is a nest of unary abstractions and \textsc{E-Call}
turns the checked call into a left-associated sequence of ordinary
applications. Model nondeterminism appears only when such a body performs
an \(\msf{Agentic.infer}\) action.

\subsection{Actions and Traces}
\label{sec:actions-traces}

We use two enforcement phases for a concrete action \(a\). Request
enforcement decides whether the program may ask for the typed action at
the current trace prefix. Commit enforcement decides whether the default
runtime implementation may actually perform the external side effect.
The split is what makes handled actions persistent in the audit trace
without forcing them to escape to the caller.

In \figref{fig:dynamic-action-enforcement}, \(q_\tau\) is the monitor
state reached by replaying \(\tau\), and
\(q_r=\delta_\Pi(q_\tau,\reqevt{a})\) is the state after the request.
Request enforcement succeeds only when \(q_r\notin\Bad\). If a handler
matches, \textsc{E-Perform-Handle} records the request and handled outcome
before evaluating the selected arm. Otherwise,
\textsc{E-Perform-Commit} additionally requires
\(a\preceq\epsilon_b\) and an accepted commit transition before invoking
the default implementation. The rules append events only after these
premises succeed.

The implementation may add deployment-grant, tool-exposure, or sandbox
predicates to these checks. The core rules omit them because they do not
change the proof obligations: they only make request or commit
enforcement more restrictive.

The policy trace records request, handled, commit, and denial events. An
arm's resume or finish is represented by its terminal outcome rather than
by another policy event; an
implementation may retain that distinction as diagnostic metadata. The
action redex is written directly as
\(\kw{perform}\ a\). Figure~\ref{fig:dynamic-action-enforcement}
first enters the selected arm and then separates its terminal outcomes.
\textsc{E-Perform-Resume} fills the suspended action site.
\textsc{E-Perform-Finish} unwinds to the handled-expression delimiter.
In the handled-action rules, \(\msf{dispatchH}(H,a)\) returns the
nearest matching handler frame and the selected arm body after binding
the concrete action payload to the clause parameters.
The predicate \(\msf{nohandler}(H,a)\) selects the complementary path;
\(\msf{dispatchD}(a)=v\) then invokes the registered default
implementation and returns its value.
If enforcement fails, the semantics raises a typed error action and
records a denial event. \textsc{E-Request-Deny} covers a rejected request;
\textsc{E-Commit-Deny} covers either a rejected commit transition or an
action outside \(\epsilon_b\). Both use
\(\msf{PolicyDenied}\) as the core typed error; implementations may refine
the diagnostic with deployment-grant, sandbox, schema, or tool-exposure
causes.

Agent-scoped inference is a distinguished action, not a separate core
redex. Its surface elaboration is:
\[
  \llbracket \kw{perform}\ \etasinst{infer}{T}(p) \rrbracket_g
  =
  \kw{perform}\ \msf{Agentic.infer}\langle g,T\rangle(p).
\]
Its provider nondeterminism, output-schema validation, model-selected
tool calls, metadata, and token consumption are part of default dispatch
for that action. Each requested tool call is mediated as an action; no
model output can cause an unmediated host operation. If output validation
fails, dispatch raises
\(\msf{Error.raise}\langle\msf{SchemaError}\rangle\) and records the
failure. Retries are represented by ordinary source-level control around
the inference operation and do not change the meaning of the model
relation.

\subsection{Runtime Policy Enforcement}
\label{sec:runtime-policy-enforcement}

Runtime policy enforcement is prefix-based. Before a request or commit
event is appended, the monitor compiled from the effective
\TraceSpecAlgebra{} object is advanced on that event. If the resulting state is
in \(\Bad\), the corresponding request or commit is not authorized; the
trace records a denied request or denied commit instead. The denial
rules in \figref{fig:dynamic-action-enforcement} are the failure
counterparts of the monitor and effect-boundary premises in the action
rules.
These rules are the dynamic counterpart of the static automata analysis
in \secref{sec:policy-automata}. Static proof can remove a check only
when the compiler can show that all abstract prefixes accepted by the
program remain outside \(\Bad\). Otherwise, the residual check remains
explicit as a runtime enforcement obligation.

\paragraph{Effect boundaries.}
An effect boundary is a runtime contract for an entry point or checked
callable boundary. It constrains the actions that may commit through the
default runtime implementation and therefore escape the boundary as real
external side effects. In the core relation, \(\epsilon_b\) is fixed for
the checked boundary currently being evaluated. A nested checked call can
be modeled by evaluating the callee under a smaller boundary, but this is
an indexed-judgment change rather than mutable runtime state. Because
boundaries narrow rather than widen commit authority, callees cannot
acquire external actions unavailable to their callers. A local handler
may still interpret a request as a dry run, mock result, or recovery
path; that interpretation records a handled event rather than a commit
event.

\paragraph{Implementation refinements and recovery.}
Deployments may strengthen core enforcement with tool-exposure, grant,
and sandbox predicates; these checks only reject additional requests or
commits. The resulting trace also guides recovery: deterministic
computations may be recomputed, whereas agent calls, external reads,
approvals, and non-idempotent writes are replayed from checkpoints.
Resampling a model call is distinct from replay and requires explicit
policy authorization.

\subsection{Handlers}
\label{sec:dynamic-handlers}

Handlers interpret selected action requests after request enforcement
succeeds. The nearest matching frame records a handled event and evaluates
its arm; \(\kw{resume}\) fills the suspended action site, whereas
\(\kw{finish}\) supplies the result of the handled expression. With no
matching frame, execution proceeds through commit enforcement to the default
dispatcher. Installing a handler changes neither \(\epsilon_b\) nor \(\Pi\):
it may make a request non-escaping, but it cannot create authority to commit
an action rejected by the boundary or monitor.

\FloatBarrier

\section{Soundness}
\label{sec:soundness}

This section states the safety properties targeted by the design. The
prototype currently implements the corresponding checks as compiler and
interpreter invariants; a mechanized proof is future work. We separate two
layers of metatheory. First, ordinary type safety ensures that the core
calculus is well behaved: well-typed closed programs preserve their
result type as they step and do not get stuck. Second, the
agent-language-specific theorems connect the same typing derivations to
effects, traces, policies, and handlers.

\subsection{Auxiliary Notions}
\label{sec:soundness-aux}

Let \(\alpha(a)\) be the abstraction of a concrete action \(a\) to an
effect pattern, and let \(a \preceq \epsilon\) mean that \(\alpha(a)\) is
covered by effect row \(\epsilon\), using \(\Xi\) for any spec-bound
resource predicates in the pattern. Let \(\msf{reqs}(\tau)\) be the
sequence of actions that appear in request events, and let
\(\msf{commits}(\tau)\) be the sequence of actions that appear in commit
events, ignoring checkpoint and resume metadata. Let \(A \models \tau\)
mean that concrete trace \(\tau\) is represented by abstract
requested-action trace \(A\). Let \(M_\Pi\) be the monitor denoted by the
effective environment \(\Pi\), which is obtained by compiling
\TraceSpecAlgebra{} objects. We write \(A' \sqle A\) for language
inclusion between requested-action abstractions, and
\(\epsilon' \sqle \epsilon\) for effect-row inclusion.

We say that a configuration \(C=\tuple{e,H,\tau}\) is
\emph{well formed} under \(\epsilon_b;\Pi\) when the expression is well
typed, the active effect boundary covers the entry declaration, handlers
in \(H\) have well-typed arms, and all residual policy checks inserted
by the compiler are present in the interpreter plan. Resource markers are ordinary
types, and their facts are checked through action signatures and type-spec
evidence rather than as separate mutable components of the core configuration.

\subsection{Basic Type Safety}
\label{sec:basic-type-safety}

The standard preservation/progress facts are stated for closed
\CoreEtas{} configurations whose callable, action, handler, and spec
summaries are well formed in \(\Xi\). Runtime policy denial and abort are
not stuck states: denial becomes a typed error action, while abort is an
explicit terminal outcome.

\begin{lemma}[Preservation]
\label{lem:preservation}
Assume well-formed \(\Xi\) and \(\Pi\), and suppose
\[
  \typejp{\emptyset}{\Pi}{e}{\tau}{\epsilon}{A}{R}
  \qquad
  \tuple{e,H,\theta}
    \step
  \tuple{e',H',\theta'} .
\]
If the initial configuration is well formed and contains the residual
checks \(R\), then there exist \(\epsilon'\), \(A'\), and \(R'\) such
that
\[
  \typejp{\emptyset}{\Pi}{e'}{\tau}{\epsilon'}{A'}{R'}
  \qquad
  \epsilon' \sqle \epsilon
  \qquad
  A' \sqle A
  \qquad
  R' \subseteq R .
\]
Moreover, \(H'\) is well formed and any trace event appended by the step
is represented by \(A\).
\end{lemma}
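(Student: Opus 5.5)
The proof goes by induction on the derivation of \(\tuple{e,H,\theta}\step\tuple{e',H',\theta'}\), with an inner case analysis on the typing derivation of \(e\). Three auxiliary lemmas come first. A \emph{substitution lemma}: if \(\Gamma,x:\sigma\vdash e\Rightarrow\tau\mathbin{!}\epsilon\mathbin{\triangleright}A\dashv R\) and \(\emptyset\vdash v:\sigma\), then \(e[x\mapsto v]\) types with the same \(\tau\) and with summaries included in \(\epsilon,A,R\). A \emph{value lemma}: values synthesize \(\emptyset,\emptyset,\emptyset\), by \textsc{T-Val}, \textsc{T-Var}, \textsc{T-Abs}, and \textsc{T-Handler}. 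A \emph{context decomposition lemma}: a typing of \(E[e]\) factors into a typing of \(e\) with summaries \((\epsilon_0,A_0,R_0)\) and a typing of \(E\) with a hole. Replacing the hole's summaries by smaller ones yields smaller overall summaries. This holds because \(;\), \(\sqcup\), \(\effJoin\), and \(\cup\) are monotone for \(\sqle\) and \(\subseteq\), and because monitor discharge is antitone: \(A'\sqle A\) and \(\Xi;\Pi\vdash A\leadsto R\) imply \(\Xi;\Pi\vdash A'\leadsto R''\) with \(R''\subseteq R\). That last fact follows from monotonicity of \(\asem{\cdot}_{\Xi,\Pi}\) over \(\mathcal{P}(Q_\Pi)\). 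Together these discharge \textsc{E-Ctx}, and the extra context forms \(\kw{scope}_h\) and \(\msf{arm}_h\) receive typing rules that mirror \textsc{T-HandleWith} and the arm judgment.

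The pure redexes are routine. For \textsc{E-Beta} and \textsc{E-Let}, invert \textsc{T-App} or \textsc{T-Let} and apply substitution; the latent \(A_f'\) from the function type becomes the trace of the body. For \textsc{E-If}, the summary \(A_1\) or \(A_2\) is included in \(\emptyset;(A_1\sqcup A_2)\). For \textsc{E-Call}, well-formedness of \(\Xi\) guarantees that \(\msf{body}(c)\) is a nested abstraction typed at \(\Xi(c)\). Iterated \textsc{T-App} then gives \(\effJoin\epsilon_c\) and \(;A_c\), and \(R_c\) is reused.

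The action rules are the central cases. For \textsc{E-Perform-Commit}, \textsc{T-Action} gives \(\Sigma_a\) result type \(\tau_a\), and well-formedness of \(\Xi\) requires \(\msf{dispatchD}\) to return a value of that type. The result value has empty summaries, and the appended events \(\reqevt{a}\cdot\commitevt{a}\) are represented by \(A_a=A_0[\overline{v/x}]\), because the signature template is required to cover both phases. For the two deny rules, the new redex \(\kw{perform}\ \msf{Error.raise}\langle\msf{PolicyDenied}\rangle(a)\) has type \(\msf{never}\) and can be checked at \(\tau\). Its row \(\msf{Error}\langle\msf{PolicyDenied}\rangle\), however, is not generally included in \(\epsilon\). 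I would resolve this by assuming, as part of well-formedness, that every action signature's row and template include the denial error, just as the overview rows list \(\msf{Error}\langle\msf{PolicyDenied}\rangle\).

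For \textsc{E-Perform-Handle}, configuration well-formedness gives a typed arm \(u\) for the frame \(h\in H\). The arm is typed against answer \(\tau_a\) and result \(\tau_r\), where \(\tau_r\) is the type of the enclosing \(\kw{scope}_h\). The typing of \(\msf{arm}_h(u)\) in the hole has type \(\tau_a\) with the arm's \(\epsilon_i',A_i\), and \(\kw{finish}\) is typed by jumping to \(\tau_r\). The events \(\reqevt{a}\cdot\handleevt{a}{h}\) are represented by \(\msf{handle}_{\epsilon_h}(A)\sqcup A_h\). The remaining handler rules are \textsc{E-HandleEnter}, \textsc{E-HandleExit}, \textsc{E-Perform-Resume}, \textsc{E-Perform-Finish}, \textsc{E-FinishSkip}, and \textsc{E-FinishReturn}. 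They are handled by an invariant relating the nesting of \(\kw{scope}_h\) frames in \(e\) to the stack \(H\). This invariant is exactly the statement that \(H'\) is well formed, and it requires that \(\kw{finish}_h\ v\) carry type \(\tau_r\) of the matching scope.

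I expect the main obstacle to be this handler and arm case, specifically the non-local \(\kw{finish}\) rules. Discarding a scope-free context \(G\) changes the expression type at the hole, so preservation has to be stated for a typing of \(\kw{finish}_h\ v\) that is polymorphic in its local type, like \(\msf{never}\). Under that typing, dropping \(G\) and a non-target scope only removes summaries. I would first try to give \(\kw{finish}_h\ v\) type \(\msf{never}\) with an auxiliary index recording \(\tau_r\), and to check that \(\msf{never}\) is accepted at every context position. I would also check that summary inclusion \(A'\sqle A\) holds when an outer \(\msf{handle}_{\epsilon_h}\) has already accounted for the discarded requests.
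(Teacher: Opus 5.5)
\paragraph{Verdict.} You follow the paper's sketch: substitution for the pure redexes, the callable summary for \textsc{E-Call}, stack bookkeeping for the handler rules, and the typing-rule correspondence for the action rules. You also correctly point out two premises the sketch omits: the \(\msf{Error}\langle\msf{PolicyDenied}\rangle\) row that the denial rules introduce, and a \(\msf{never}\)-like typing for \(\kw{finish}_h\ v\). The genuine gap is the trace component \(A'\sqle A\).

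\paragraph{Why the context lemma fails for traces.} Your context lemma treats trace summaries like rows, where dropping a component shrinks the set. But \(\sqle\) is language inclusion over sequences, and there deleting or emptying a sequenced component is not a decrease: \(X;Z\not\sqsubseteq X;Y;Z\) in general. Values contribute \(\emptyset\) to the sequences built by \textsc{T-Let} and \textsc{T-App}, so \(\emptyset\) must denote the empty trace. It is therefore not below \(A_a\) unless \(A_a\) already contains the empty trace. Since values synthesize only \(\emptyset\), the \textsc{E-Perform-Commit} step from \(\kw{perform}\ a\) to a value admits no \(A'\sqle A\) at all, even at top level. Your finish cases fail for the same reason. \textsc{E-FinishSkip} deletes the non-target scope's summary from in front of the outer continuation, and that continuation's summary stays sequenced after the \(\msf{never}\)-typed \(\kw{finish}_h\ v\).

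\paragraph{What a workable invariant looks like.} The invariant has to be stated relative to the appended events \(\eta\), as \(\eta\cdot A'\sqle A\). This is also what effect soundness needs, because \(A\models(\tau_1\setminus\tau_0)\) concerns the concatenation of all emitted events. The finish cases additionally need a summary for \(\kw{finish}_h\ v\) that models its early exit. Your context lemma does survive this change through the sequencing frames, because under left-to-right call-by-value only value summaries \(\emptyset\) precede the hole.

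\paragraph{The handled-action case.} \textsc{E-Perform-Handle} is not repaired by the reformulation. The hole's summaries become the arm's \(\epsilon_i',A_i\), which are unrelated to \(\epsilon_a,A_a\), so the argument must be made at the matching \(\kw{scope}_h\) frame.
\begin{itemize}
\item For rows this works, because \textsc{T-HandleWith} joins \(\epsilon_r\supseteq\epsilon_i'\). That is all the paper's remark that the handle rule ``accounts for the effects produced by arms'' covers.
\item For traces it fails. The scope body now has summary \(A_i;\mathit{Rest}\), so a scope rule mirroring \textsc{T-HandleWith} gives \(\msf{handle}_{\epsilon_h}(A_i;\mathit{Rest})\sqcup A_h\). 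Its traces (arm events followed by the resumed continuation) are not in \(\msf{handle}_{\epsilon_h}(A_a;\mathit{Rest})\) after removing \(\reqevt{a}\cdot\handleevt{a}{h}\), nor in the separate alternative \(A_h\).
\item That joined alternative \(A_h\) also breaks \(\eta\cdot A'\sqle A\) for any commit emitted inside a scope.
\end{itemize}
You need a handle summary that sequences arm traces after each handled request rather than joining them.

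\paragraph{Missing monitor-weakening lemma.} \textsc{T-Abs} checks bodies under the empty monitor context with no residuals. After \textsc{E-Beta} (and hence after \textsc{E-Call}), the body must be retyped under \(\Pi\), with residuals contained in the \(R_f'\) of \(\Xi;\Pi\vdash A_f'\leadsto R_f'\). Your substitution lemma keeps the monitor context fixed, so it cannot supply this; a monitor-weakening lemma for \(\leadsto\) is also missing.
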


\begin{proof}[Proof sketch]
By case analysis on the small-step rule. Pure computation rules
\textsc{E-Beta}, \textsc{E-Let}, \textsc{E-IfTrue}, and
\textsc{E-IfFalse} use the substitution lemma for ordinary values.
Checked calls use the callable summary in \(\Xi\), so expanding
\(\kw{call}\ c(\overline{v})\) to the associated unary function
application preserves the declared result type and summary bound.
Handler rules push, pop, or unwind only \(H\); the typing rule for
\(\kw{handle}\) already accounts for the effects produced by arms.
Action rules, including inference actions, append request,
handled, denied, or commit events only at redexes introduced by the
corresponding typing rules, and residual checks are consumed only after
their guarded dynamic test has succeeded.
\end{proof}

\begin{lemma}[Progress]
\label{lem:progress}
Assume well-formed \(\Xi\), \(\Pi\), handler stack \(H\), and trace
prefix \(\theta\). If
\[
  \typejp{\emptyset}{\Pi}{e}{\tau}{\epsilon}{A}{R}
\]
and the runtime plan contains the residual checks \(R\), then exactly one
of the following holds:
\[
\begin{array}{ll}
\text{(i)} & e \text{ is a value;}\\
\text{(ii)} & e \text{ is an explicit enforcement terminal;}\\
\text{(iii)} &
  \exists e',H',\theta'.\
  \tuple{e,H,\theta}
    \step
  \tuple{e',H',\theta'} .
\end{array}
\]
Here enforcement terminals include schema failure and abort outcomes
produced by the dynamic semantics. Policy denial instead takes a step to
the typed \(\msf{Error.raise}\langle\msf{PolicyDenied}\rangle\) action.
\end{lemma}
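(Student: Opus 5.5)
The plan is the standard route: induction on the typing derivation of \(e\), preceded by a canonical-forms lemma and a decomposition lemma. \emph{Canonical forms} says that a closed value of type \(\TyBool\) is \(\kw{true}\) or \(\kw{false}\). A closed value of type \(\funty{\tau_1}{\tau_2}{\epsilon}{A}\) is a \(\lambda\)-abstraction. A closed value of handler type is a \(\kw{handler}\{\overline{r}\}\). Each case follows by inversion on \textsc{T-Val}, \textsc{T-Abs}, and \textsc{T-Handler}/\textsc{T-HandlerCheck}, since \(\Gamma=\emptyset\) rules out \textsc{T-Var}. \emph{Decomposition} says that a closed non-value typed expression is either an enforcement terminal or has the form \(E[r]\). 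Here \(r\) is one of: a pure redex, \(\kw{call}\ c(\overline{v})\), \(\kw{perform}\ a\), \(\kw{handle}\ e\ \kw{with}\ h\), \(\kw{scope}_h(v)\), \(\kw{scope}_{h'}(G[\kw{finish}_h\ v])\), \(\msf{arm}_h(\kw{resume}\ v)\), or \(\msf{arm}_h(\kw{finish}\ v)\). With these two lemmas, \textsc{E-Ctx} lifts any step of \(r\) to a step of \(E[r]\). The main task is then to show that every redex steps.

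The case split follows the typing rules. \textsc{T-Val}, \textsc{T-Abs}, and \textsc{T-HandlerCheck} give case (i). For \textsc{T-App}, \textsc{T-Let}, \textsc{T-If}, \textsc{T-Call}, and the evaluated position of \textsc{T-HandleWith}, the induction hypothesis applies to the leftmost unevaluated subterm. If that subterm steps, \textsc{E-Ctx} finishes the case. If it is a terminal, the whole expression is a terminal, because terminals propagate through evaluation contexts. If all subterms are values, canonical forms supply the redex. This uses \textsc{E-Beta}, \textsc{E-Let}, \textsc{E-IfTrue}/\textsc{E-IfFalse}, and \textsc{E-Call}; for the last one, well-formedness of \(\Xi\) guarantees that \(\msf{body}(c)\) is defined and has the arity recorded in \(\Sigma_c\). \textsc{T-HandleWith} with a handler value steps by \textsc{E-HandleEnter}. \textsc{T-TraceConform} is immediate from the induction hypothesis, since \(e\sim S\) is erased at elaboration. \textsc{T-LetHandler} reduces by \textsc{E-Let}.

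\textsc{T-Action} is the key case, and I expect it to be the main obstacle. Let \(q_r=\delta_\Pi(q_\theta,\reqevt{a})\). The four action rules cover all cases, which I check through a total case split. If \(q_r\in\Bad\), \textsc{E-Request-Deny} fires. Otherwise I split on \(\msf{dispatchH}(H,a)\) versus \(\msf{nohandler}(H,a)\); these are complementary by definition. The handled branch uses \textsc{E-Perform-Handle}. In the unhandled branch, either \(a\preceq\epsilon_b\) and the commit transition avoids \(\Bad\), giving \textsc{E-Perform-Commit}, or the negation holds, giving \textsc{E-Commit-Deny}. The one remaining premise is that \(\msf{dispatchD}(a)\) is defined. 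I will assume that default dispatch is total on actions in \(\Sigma_a\), or else yields an enforcement terminal; for example, a \(\msf{SchemaError}\) raise, or abort, which is exactly clause (ii). Residual checks in \(R\) are present in the runtime plan by hypothesis, so they only decide among these branches and never block a step.

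The administrative forms \(\kw{scope}\), \(\msf{arm}\), and \(\kw{finish}_h\) need an auxiliary typing invariant that comes from configuration well-formedness. It states that every \(\kw{scope}_h\) and \(\msf{arm}_h\) frame in \(e\) matches a frame of \(H\) in stack order, and that arm bodies are typed by the affine arm judgment with \(\upsilon\le1\). Under this invariant, \(\msf{arm}_h(\kw{resume}\ v)\) and \(\msf{arm}_h(\kw{finish}\ v)\) step directly. A \(\kw{finish}_h\ v\) always lies under its own \(\kw{scope}_h\), so the nearest enclosing scope triggers \textsc{E-FinishSkip} or \textsc{E-FinishReturn}. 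An outermost bare \(\kw{finish}_h\) is excluded because \(\kw{finish}\) is typable only in arm mode. Exclusivity of (i)–(iii) is routine: values and terminals have no redex, since no rule applies to them and \textsc{E-Ctx} needs a redex inside. I expect the stack-matching invariant to be the most delicate part, and will state it explicitly as part of well-formedness so that Lemma~\ref{lem:preservation} also maintains it.
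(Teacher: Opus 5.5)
Your proposal is essentially the paper's proof: induction on the typing derivation, \textsc{E-Ctx} when an evaluation-position subterm steps, a redex rule once the subterms are values, and an exhaustive split of \(\kw{perform}\ a\) into \textsc{E-Request-Deny}, \textsc{E-Perform-Handle}, \textsc{E-Perform-Commit}, and \textsc{E-Commit-Deny}, with failure of default dispatch treated as an enforcement terminal exactly as the paper does. Your canonical-forms, decomposition, and handler-stack invariants spell out what the sketch leaves implicit, but you should add one clause: no closed value checks against \(\msf{never}\), so an arm typed by \textsc{T-ArmNever} cannot leave a stuck \(\msf{arm}_h(v)\). This is the paper's remark that arms cannot fall through and must end in \(\kw{resume}\), \(\kw{finish}\), or an explicit abort.
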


\begin{proof}[Proof sketch]
By induction on the typing derivation. Values are immediate. Elimination
forms either contain a non-value subexpression, in which case the
evaluation-context rule applies by the induction hypothesis, or contain
values and match a redex rule. A performed action either finds a matching
handler, passes through the default dispatcher, or is rejected by policy,
boundary, or schema enforcement. The last case is progress:
the semantics produces a typed enforcement terminal rather than becoming
stuck. The arm typing rules ensure that handler arms cannot fall through;
they end in \(\kw{resume}\), \(\kw{finish}\), or an explicit abort.
\end{proof}

\begin{theorem}[Type soundness]
\label{thm:type-soundness}
If a closed expression \(e\) is well typed under well-formed
\(\Xi\) and \(\Pi\), and execution starts from a well-formed runtime
configuration containing the residual checks produced by typing, then no
reachable configuration is stuck.
\end{theorem}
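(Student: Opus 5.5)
The plan is the standard syntactic argument: combine Lemma~\ref{lem:preservation} and Lemma~\ref{lem:progress} by induction on the length of the reduction sequence. Concretely, we prove the following invariant for every \(n\). If \(\tuple{e,H,\theta}\step^{n}\tuple{e_n,H_n,\theta_n}\), then
\begin{enumerate}
\item there are \(\epsilon_n \sqle \epsilon\), \(A_n \sqle A\), and \(R_n \subseteq R\) with \(\typejp{\emptyset}{\Pi}{e_n}{\tau}{\epsilon_n}{A_n}{R_n}\);
\item the configuration \(\tuple{e_n,H_n,\theta_n}\) is well formed;
\item the runtime plan still contains \(R_n\).
\end{enumerate}
Once this holds, Lemma~\ref{lem:progress} applies to every reachable configuration. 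Each one is then a value, an explicit enforcement terminal, or can step, and none of these is a stuck state.

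The base case \(n=0\) is exactly the hypothesis of the theorem. For the inductive step, we apply Lemma~\ref{lem:preservation} to the \(n\)-th configuration. It yields a typing of \(e_{n+1}\) at the same type \(\tau\), with smaller summaries, and a well-formed \(H_{n+1}\). Transitivity of \(\sqle\) and \(\subseteq\) gives the bounds against the original \(\epsilon\), \(A\), and \(R\). Because \(R_{n+1}\subseteq R_n\) and the plan is fixed for the checked entry point, the residual-check side condition carries over. Two parameters, the boundary \(\epsilon_b\) and the monitor \(\Pi\), are fixed for the step relation. So the boundary still covers the entry declaration, and the only components that change are \(e\), \(H\), and \(\theta\).

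The main obstacle is the gap between the well-formedness that Lemma~\ref{lem:preservation} re-establishes and the well-formedness it assumes. The lemma returns a well-formed \(H'\) and a typing of \(e'\), but the whole configuration must be well formed before the lemma can be applied again. So the plan is first to strengthen the invariant (or state an auxiliary corollary) showing that well-formedness of configurations is closed under steps. This needs two facts:
\begin{itemize}
\item the arms of each handler frame remain well typed after \textsc{E-HandleEnter}, because \textsc{T-HandleWith} checks \(e_h\) at the handler type;
\item the residual plan is monotone in \(R\).
\end{itemize}
A second subtlety concerns the administrative forms \(\kw{scope}_h(\cdot)\), \(\msf{arm}_h(\cdot)\), and \(\kw{finish}_h\,v\). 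They arise only at runtime, so typing must be extended to them. We would use this extension implicitly, as the preservation lemma already does.

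The case I would check first, and where I expect trouble, is a stray \(\kw{finish}_h\,v\) or a \(\msf{arm}_h(\cdot)\) frame that has no matching \(\kw{scope}_h\) or handler in \(H\). The invariant must also record that every \(h\) appearing in \(e\) has a matching frame in \(H\), in the same stack order. With that, progress never encounters an unmatched finish, and the theorem follows.
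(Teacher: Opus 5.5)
Your proposal is correct and follows essentially the same route as the paper: induction on the length of the reduction sequence, with \lemref{lem:preservation} re-establishing typing at the same type \(\tau\) under shrinking \(\epsilon\), \(A\), \(R\), and \lemref{lem:progress} ruling out stuck configurations. The extra bookkeeping you add is left implicit in the paper's two-line sketch, where it is folded into the two lemmas, so it strengthens the same argument rather than changing it. That bookkeeping covers closure of configuration well-formedness under steps, typing of the runtime forms \(\kw{scope}_h\), \(\msf{arm}_h\), \(\kw{finish}_h\), and the invariant matching handler tags in \(e\) to frames in \(H\).
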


\begin{proof}[Proof sketch]
By induction on the length of the multi-step execution, using
\lemref{lem:preservation} to re-establish typing after each step and
\lemref{lem:progress} to show that each reachable non-value,
non-terminal configuration can take a step.
\end{proof}

\subsection{Effect and Trace Soundness}
\label{sec:effect-soundness}

Effect soundness states that the escaping row is a genuine upper bound on
committed external behavior, while the requested-action abstraction is a
genuine upper bound on requested trace behavior. The theorem deliberately
does not require every requested action to be covered by the escaping
row, because a handler may make a request non-escaping.

\begin{theorem}[Effect soundness]
\label{thm:effect-soundness}
Assume well-formed \(\Xi\) and \(\Pi\), assume
\(\typejp{\Gamma}{\Pi}{e}{\tau}{\epsilon}{A}{R}\), and assume an initial well-formed
configuration
\[
  \tuple{e,H,\tau_0}
  \steps
  \tuple{v,H',\tau_1}.
\]
Then:
\[
\begin{array}{l}
  \forall a \in \msf{commits}(\tau_1) \setminus \msf{commits}(\tau_0).
  \ a \preceq \epsilon \lor \msf{residCov}(a,\epsilon),
\\[1mm]
  A \models (\tau_1 \setminus \tau_0).
\end{array}
\]
Here \(\msf{residCov}(a,\epsilon)\) means that \(a\) is a
compiler-inserted residual check whose guarded action is covered by
\(\epsilon\).
\end{theorem}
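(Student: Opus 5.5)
The proof goes by induction on the length \(n\) of the reduction sequence \(\tuple{e,H,\tau_0}\steps\tuple{v,H',\tau_1}\). We strengthen the statement so that it applies to every intermediate configuration, not only to the final value. The invariant is as follows. Suppose \(\tuple{e_k,H_k,\tau^{(k)}}\) is reachable, \(e_k\) has type \(\tau\) with summaries \(\epsilon_k\sqle\epsilon\), \(A_k\) and \(R_k\subseteq R\), and the suffix \(\tau^{(k)}\setminus\tau_0\) is a prefix represented by \(A\) whose continuation from that point is represented by \(A_k\). Then every commit appended so far satisfies \(a\preceq\epsilon\lor\msf{residCov}(a,\epsilon)\). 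The base case \(k=0\) is trivial because the suffix is empty. For the inductive step we apply \lemref{lem:preservation}. It re-establishes typing and gives \(\epsilon_{k+1}\sqle\epsilon_k\), \(A_{k+1}\sqle A_k\) and \(R_{k+1}\subseteq R_k\). It also states that any appended event is represented by the current abstraction. Composing these facts along the chain and using transitivity of \(\sqle\) yields \(A\models(\tau_1\setminus\tau_0)\) at the end.

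For the commit clause we do a case analysis on the rule used in each step. Only \textsc{E-Perform-Commit} appends a \(\commitevt{a}\) event. Rules \textsc{E-Perform-Handle}, \textsc{E-Request-Deny} and \textsc{E-Commit-Deny} append only request, handled or denied events. The pure, call and scope rules append nothing. So fix a commit step at \(\tuple{E[\kw{perform}\ a],H_k,\tau^{(k)}}\) with \(\msf{nohandler}(H_k,a)\). We need a decomposition lemma, proved by induction on \(E\). It says that if \(E[\kw{perform}\ a]\) is typed with row \(\epsilon_k\) and no \(\kw{scope}_h\) frame of \(E\) holds a handler matching \(a\), then the row \(\epsilon_a\) from the \textsc{T-Action} premise satisfies \(\epsilon_a\sqle\epsilon_k\). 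The key case is the \(\kw{scope}_h(E')\) frame. Its typing is inherited from \textsc{T-HandleWith}, whose conclusion row is \(\epsilon_g\effJoin(\epsilon-\epsilon_h)\effJoin\epsilon_r\). Because the handler at that frame does not match \(a\), \(\alpha(a)\) is not in \(\epsilon_h\), so \(\alpha(a)\) survives the subtraction \(\epsilon-\epsilon_h\). Together with \(\alpha(a)\preceq\epsilon_a\), which holds by instantiation of the action signature, we get \(a\preceq\epsilon_k\sqle\epsilon\). If the action's concrete arguments are covered only under a dynamic resource predicate, \textsc{T-Action} recorded a \(\msf{check}\) in \(R_a\subseteq R\). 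The commit premise then shows the guarded check succeeded, and this gives the \(\msf{residCov}\) disjunct.

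One more case needs attention: the committed \(\kw{perform}\) may lie inside a handler arm \(\msf{arm}_h(u)\), or inside an \(\kw{Error.raise}\) produced by a denial rule. For arms, \textsc{T-HandleWith} adds the arm row \(\epsilon_r\) to the conclusion, so arm-performed actions are also bounded by \(\epsilon\). For denials, the error action \(\msf{Error}\langle\msf{PolicyDenied}\rangle\) must appear in the declared row (as in the overview example) for the commit case to go through; otherwise it is handled.

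The main obstacle is that typing of the administrative forms \(\kw{scope}_h\), \(\msf{arm}_h\) and \(\kw{finish}_h\) is not given explicitly. Preservation is only sketched across these forms, and the decomposition lemma depends on it. I would first add run-time typing rules for these forms. These rules mirror \textsc{T-HandleWith}: a \(\kw{scope}_h(e)\) carries \(\epsilon_h\) from \(h\) on the stack, and an arm carries \(\epsilon_r\) and \(A_h\). I would also state a stack-consistency invariant saying that the frames of \(H\) coincide with the \(\kw{scope}\) frames of \(e\). Once these are in place, \(\msf{nohandler}(H,a)\) implies that no enclosing scope frame matches \(a\), which is exactly what the scope case needs.
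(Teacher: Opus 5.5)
Your proposal has two genuine gaps, one in each clause of the theorem.

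\textbf{The trace clause.} The step proving \(A\models(\tau_1\setminus\tau_0)\) does not go through as written. The abstraction is ordered: it is built with \(;\), and \(\emptyset\) must denote the empty trace, the unit of \(;\). So the clause is a statement about a concatenation. The facts you take from \lemref{lem:preservation} never produce one. Those facts are that each appended event is represented by the current abstraction, and that \(A_{k+1}\sqle A_k\). They are consistent with \(A_k=A_{k+1}=\reqevt{a^\sharp}\) and two successive steps each appending \(\reqevt{a}\). Yet \(\reqevt{a}\cdot\reqevt{a}\) is not represented by \(A_k\), and transitivity of inclusion cannot repair this.

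Your invariant has the right shape, but its inductive step needs a residual form of preservation. If a step appends events \(\vec{\eta}\), then \(\vec{\eta}\) followed by any trace of \(A_{k+1}\) must be represented by \(A_k\). With that, every extension of \(\tau^{(k)}\setminus\tau_0\) by a trace of \(A_k\) stays represented by \(A\), and the final value, whose abstraction is \(\emptyset\), gives the clause. The lemma as stated cannot supply this. At an \textsc{E-Perform-Commit} step to a value it would demand \(\emptyset\sqle A_a\), i.e.\ that the action template admit the empty trace. So you must prove the residual version case by case.

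The denial steps belong to that work. The new redex \(\kw{perform}\ \msf{Error.raise}\langle\msf{PolicyDenied}\rangle(a)\) has a row and template not bounded by those of \(a\). Your remark about \(\msf{PolicyDenied}\) therefore has to become a standing assumption on action signatures, not a side case.

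\textbf{The scope case of your decomposition lemma.} From the fact that the handler at this frame does not match \(a\), you conclude that the pattern covering \(a\) survives \(\epsilon-\epsilon_h\). That inference fails on two counts:
\begin{itemize}
\item Dispatch (\(\msf{dispatchH}\), \(\msf{nohandler}\)) matches action labels.
\item In \textsc{T-Handler}, \(\epsilon_h=\{\epsilon_i\}_i\) is the union of the signature rows of the handled labels, and \(-\) deletes row patterns, not \(\alpha(a)\).
\end{itemize}
Distinct labels may share a pattern, since effects are families; two actions can both have row \(\msf{CompanyEmail.send}\langle\msf{WorkAccount}\rangle\). If the handler covers only one of them, the shared pattern is subtracted, while the other label's perform is unhandled and commits. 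Then \(a\not\preceq\epsilon\).

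The commit premise \(a\preceq\epsilon_b\) does not rescue this. Well-formedness places \(\epsilon_b\) above the entry's declared row, which is above the inferred \(\epsilon\). This is also why the paper's sketch, which attributes coverage to commit enforcement, only yields \(a\preceq\epsilon_b\). Your static route through evaluation contexts is the right kind of argument for the stated bound.

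You do need an explicit well-formedness condition aligning label dispatch with row subtraction, carried by your run-time typing of \(\kw{scope}_h\). One such condition: the signature row of every label a frame does not handle is disjoint from that frame's \(\epsilon_h\). Without it the scope case, and the clause itself, is false.

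Two smaller points in the same case:
\begin{itemize}
\item \(\alpha(a)\preceq\epsilon_a\) is an assumption on \(\Sigma_a\), not something \textsc{T-Action} establishes.
\item The \(R_a\) produced by \textsc{T-Action} comes from monitor discharge, not from row coverage, and \textsc{E-Perform-Commit} never consults it. So it does not give you the \(\msf{residCov}\) disjunct.
\end{itemize}
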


\begin{proof}[Proof sketch]
By induction on the small-step derivation. Pure rules add no actions.
Rules for checked callable calls and \(\kw{perform}\) append request
events introduced by \textsc{T-Call} and \textsc{T-Action} through the
static signature \(\Xi\). Callable descriptors and action signatures
contribute their boundary events through summaries rather than through
ad hoc core redexes. Commit events arise only through default dispatch
after commit enforcement, so their abstractions are covered by the
escaping row or by a compiler-inserted residual check. The handler
rule may remove handled effects from the escaping row, but the typing
rule preserves the handled computation's requested-action abstraction.
Residual checks in \(R\) are inserted by the monitor-discharge judgment
precisely at points where the static event pattern is known but the
concrete parameter is dynamic.
\end{proof}

\subsection{Policy Safety}
\label{sec:policy-safety}

Policy safety states that a program accepted by the typing judgment
cannot request or commit an action event that violates the effective
monitor compiled from active trace specs. The theorem is stated with
residual checks because \AgentLang{} intentionally permits policies whose
resource predicates depend on runtime values.

\begin{theorem}[Policy safety]
\label{thm:policy-safety}
Assume \(\typejp{\Gamma}{\Pi}{e}{\tau}{\epsilon}{A}{R}\). Suppose
execution starts from a well-formed configuration that contains \(R\),
and suppose every residual check is enforced before its guarded request
or commit event. Then every prefix of the produced trace is accepted by
\(M_\Pi\):
\[
  \forall \tau'.\ \tau' \preceq \tau_1
  \Rightarrow M_\Pi(\tau') \notin \Bad.
\]
\end{theorem}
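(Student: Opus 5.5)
The plan is to prove a stronger invariant by induction on the length of the execution \(\tuple{e,H,\tau_0}\steps\tuple{e_n,H_n,\tau_n}\). The invariant is that every prefix of \(\tau_k\) is accepted by \(M_\Pi\). Proving it for every reachable trace, not only the final one, is what makes the statement about \emph{every} prefix go through. Prefix closure is automatic: a prefix of \(\tau_{k+1}\) is either a prefix of \(\tau_k\), which is covered by the induction hypothesis, or is \(\tau_k\) extended by some of the at most two events appended in the last step.

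For the base case I will take the initial trace \(\tau_0\) to be accepted. This follows from well-formedness of the configuration: for a fresh entry \(\tau_0\) is empty and \(q_0\notin\Bad\), and otherwise well-formedness is read as including acceptance of the prefix so far. At each step I reuse \lemref{lem:preservation}, so the residual set only shrinks and the typing and well-formedness premises persist.

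The step case is a case analysis on the rule used, pushed through \textsc{E-Ctx}.
\begin{itemize}
\item The pure, call and handler-administration rules leave \(\tau\) unchanged.
\item \textsc{E-Perform-Handle} appends \(\reqevt{a}\), and its premise \(q_r\notin\Bad\) is exactly acceptance of \(\tau\cdot\reqevt{a}\). It then appends \(\handleevt{a}{h}\).
\item \textsc{E-Perform-Commit} appends \(\reqevt{a}\), guarded by \(q_r\notin\Bad\), and then \(\commitevt{a}\), guarded by \(\delta_\Pi(q_r,\commitevt{a})\notin\Bad\).
\item The denial rules append \(\denyevt{j(a)}{\xi}\) and, in \textsc{E-Commit-Deny}, a request whose acceptance is a premise.
\end{itemize}
The dynamic rules thus check request and commit events directly against the same \(\Pi\) that the step relation is parameterized by. Since the theorem's hypothesis forces each residual check to be enforced before its guarded event, the residual checks in \(R\) only add further guards and cannot break acceptance.

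The main obstacle is the events that the rules append \emph{without} a monitor premise: the handled event in \textsc{E-Perform-Handle} and the denial events. Nothing in the dynamic semantics excludes a monitor that moves to \(\Bad\) on \(\handleevt{a}{h}\) or \(\denyevt{\cdot}{\cdot}\). My first attempt is to close this statically. Typing of the surrounding \(\kw{perform}\) or \(\kw{handle}\) (\textsc{T-Action}, \textsc{T-HandleWith} with \(A_b=\msf{handle}_{\epsilon_h}(A)\sqcup A_h\)) produces a derivation of \(\Xi;\Pi\vdash A\leadsto R\). By soundness of the abstract interpretation (Cousot--Cousot: \(\gamma\) of \(\asem{\eta^\sharp}\) over-approximates the concrete successors), each such event is either proved safe, with no \(\Bad\) successor, or guarded by a residual \(\msf{check}(\eta^\sharp,\varphi)\in R\). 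The third outcome, all successors in \(\Bad\), has no derivation.

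To apply this I need that every appended event is represented by \(A\). That is supplied by the ``represented by \(A\)'' clause of \lemref{lem:preservation} and by \(A\models(\tau_1\setminus\tau_0)\) from \thmref{thm:effect-soundness}. Making this precise requires relating the abstract state set computed for the prefix of \(A\) to the concrete state \(q_\tau\), which means strengthening the invariant to \(q_{\tau_k}\in\asem{A_{\le k}}(\{q_0\})\).

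If this strengthening is too heavy, the fallback is an explicit side assumption on compiled monitors, justified by the construction of \(\compile(N)\) from \TraceSpecAlgebra{}. The \(\msf{decision}\) and \(\gg\) semantics constrain only matched request and commit actions, so \(\compile(N)\) would never transition into \(\Bad\) on handled or denial events. With that assumption, the monitor-free events are trivially safe.
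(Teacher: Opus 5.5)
Your argument for request and commit events is correct, and it takes a more direct route than the paper. The paper argues static-first. The monitor-discharge premises in typing either prove an abstract transition safe, and abstraction soundness then covers every concrete event it represents, or they leave a residual check that the denial rules catch at run time. Prefix-based mediation then extends this to all prefixes. You instead observe that every rule appending \(\reqevt{a}\) or \(\commitevt{a}\) already has, as a premise, that exactly this transition of \(\delta_\Pi\) avoids \(\Bad\). An induction over the run therefore settles these events without using the typing derivation or \(R\) at all. Your route shows that, for the core rules as written, policy safety is a property of the step relation itself. The paper's static route is what one would need to justify \emph{omitting} the run-time test on statically proved transitions, which the core rules never do. You also correctly isolate the events appended with no monitor premise: \(\handleevt{a}{h}\) in \textsc{E-Perform-Handle}, and the denial events of \textsc{E-Request-Deny} and \textsc{E-Commit-Deny}. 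The paper's sketch only says that requests and commits are mediated and does not treat these events.

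Your first fix for those events does not go through as sketched.
\begin{itemize}
\item The static discharge helps only in the \emph{proved} case. If an abstract handled or denial event is discharged as residual, the resulting \(\msf{check}(\eta^\sharp,\varphi)\) guards no request or commit, so neither the theorem's hypothesis nor any dynamic rule enforces it.
\item The invariant \(q_{\tau_k}\in\asem{A_{\le k}}(\{q_0\})\) needs a derivative-style preservation: a step appending \(\eta\) should yield an abstraction \(A'\) with \(\eta\cdot A'\sqle A\). But \lemref{lem:preservation} only gives \(A'\sqle A\).
\item Theorem~\ref{thm:effect-soundness} only speaks about runs ending in a value.
\item The local discharges in \textsc{T-Action}, \textsc{T-Call}, and \textsc{T-HandleWith} do not say from which monitor states the abstract interpretation starts.
\end{itemize}

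Your fallback is therefore the right move, but it is an added hypothesis, not a consequence of the construction of \(\compile(N)\). The paper never says how \(\msf{match}_\Xi\) or \(\delta_\Pi\) treat handled and denial events. Under the natural reading, a denial event is judged by its underlying action. Then \textsc{E-Request-Deny}, which fires exactly when \(\delta_\Pi(q_\tau,\reqevt{a})\in\Bad\), yields a rejected prefix whenever a residual request fails at run time, and the theorem as stated would fail. State the assumption explicitly, for instance \(q\notin\Bad\Rightarrow\delta_\Pi(q,\eta)\notin\Bad\) for every handled or denial event \(\eta\). With it and acceptance of \(\tau_0\), your induction is a complete proof.
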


\begin{proof}[Proof sketch]
The trace-spec premises inside typing normalize source specs to monitors;
the monitor-discharge premises then either prove an event transition
accepted, make the typing rule inapplicable, or emit a residual check in
\(R\). For proved transitions, soundness of the
abstraction ensures that all concrete events represented by the abstract
event keep the monitor outside \(\Bad\). For residual transitions, the
dynamic denial rules prevent the request or commit whenever the concrete
monitor transition would enter \(\Bad\). Since enforcement is
prefix-based and every request and commit is mediated before its event is
appended, the property holds for all trace prefixes.
\end{proof}

\subsection{Handler Transparency}
\label{sec:handler-transparency}

Handlers should not hide requests or create commit authority. This is the
key difference between \AgentLang{} handlers and unrestricted algebraic
effect handlers.

\begin{theorem}[Handler trace transparency]
\label{thm:handler-transparency}
Installing a well-typed handler frame may remove handled effects from
the escaping row, but it does not remove the corresponding request events
from the requested-action trace. If a handled computation requests
action \(a\), the produced trace contains \(\reqevt{a}\) followed by a
handled outcome. If a handler or default dispatcher commits \(a\), then
\(a\) passes the same boundary and policy checks
that would be required without the handler frame.
\end{theorem}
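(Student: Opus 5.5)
The plan splits the theorem into its three claims: a static claim about rows, a dynamic claim about request events, and a dynamic claim about commit authority. Each is discharged by inspecting the rules for $\kw{handle}$ and $\kw{perform}$. No global induction is needed beyond what \lemref{lem:preservation} and Theorem~\ref{thm:effect-soundness} already provide.

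\textbf{Static part.} This follows from \textsc{T-HandleWith}. The escaping row is $\epsilon_g \effJoin (\epsilon-\epsilon_h)\effJoin\epsilon_r$, so exactly the handled effects $\epsilon_h\subseteq\epsilon$ are removed. The requested abstraction is $A_g\mathbin{;}(\msf{handle}_{\epsilon_h}(A)\sqcup A_h)$. I will state a small lemma: $\msf{handle}_{\epsilon_h}$ maps every request event $\reqevt{a^\sharp}$ of $A$ to a sequence $\reqevt{a^\sharp}\mathbin{;}\handleevt{a^\sharp}{h}$ when $a^\sharp$ matches $\epsilon_h$, and is the identity otherwise. It follows that every request represented by $A$ is still represented by $A_b$. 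The lemma is by structural induction on $A$, and this is the step that fixes the informal definition of $\msf{handle}$ given in the text. By Theorem~\ref{thm:effect-soundness}, the concrete trace of the handled computation is then represented by $A_b$.

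\textbf{Request-then-handled claim.} Suppose evaluation of $\kw{scope}_h(E[\kw{perform}\ a])$ with $h$ on the stack reaches a request for $a$, and $h$ is the nearest matching frame. The redex can then fire only by \textsc{E-Perform-Handle}, \textsc{E-Request-Deny}, or \textsc{E-Commit-Deny}. The last is excluded because $\msf{nohandler}(H,a)$ fails. \textsc{E-Perform-Handle} appends exactly $\reqevt{a}\cdot\handleevt{a}{h}$. \textsc{E-Request-Deny} appends $\denyevt{\reqevt{a}}{\msf{PolicyDenied}}$, which is the denied outcome of the request. So I will read ``followed by a handled outcome'' as ranging over handled and denied outcomes, or else restrict the claim to requests accepted by $M_\Pi$, in which case the request event is present by construction. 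I also need that no rule deletes trace events: every rule in \figref{fig:dynamic-evaluation} maps $\tau$ to an extension of $\tau$. In particular, \textsc{E-FinishSkip}, \textsc{E-FinishReturn}, and \textsc{E-HandleExit} leave $\tau$ unchanged, so unwinding through $\kw{finish}$ cannot erase earlier requests.

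\textbf{Commit authority.} I expect this to be the main obstacle, because the paper never states the boundary and policy invariance explicitly. The argument rests on three facts. First, the only rule appending $\commitevt{a}$ is \textsc{E-Perform-Commit}. Second, its premises mention only $\epsilon_b$, $\Pi$, $q_\tau$, and $\msf{nohandler}(H,a)$. Third, $\epsilon_b$ and $\Pi$ are fixed parameters of the step relation, so by the discussion in \secref{sec:dynamic-handlers}, pushing $h$ via \textsc{E-HandleEnter} changes neither. A handler arm is an ordinary expression, so any commit it causes arises from a nested $\kw{perform}$ redex inside $\msf{arm}_h(\cdot)$, and that redex again goes through \textsc{E-Perform-Commit} with the same $\epsilon_b$ and $\Pi$.

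The remaining subtlety is that the trace prefix $\tau$, and hence $q_\tau$, contains the extra $\handleevt{a}{h}$ events, so the monitor state differs from the unhandled run. I would handle this in one of two ways. The first is to phrase ``same checks'' as the same predicates $a\preceq\epsilon_b$ and $\delta_\Pi(\cdot,\commitevt{a})\notin\Bad$ evaluated at the current prefix, which makes the claim immediate. The second, for a stronger reading, is to show that handled events are monitor-neutral, that is, $\delta_\Pi(q,\handleevt{a}{h})=q$ for monitors compiled from \TraceSpecAlgebra{} normal forms. Compiled monitors only match allow, deny, and before patterns on request and commit events, so the handled event leaves their state unchanged, and the result follows by induction on the trace.
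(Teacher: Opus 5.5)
Your proposal is correct, and its core is the paper's own argument. \textsc{E-Perform-Handle} appends $\reqevt{a}\cdot\handleevt{a}{h}$ before the arm runs, \textsc{E-Perform-Commit} is the only rule that produces commit events and it carries the enforcement premises, and pushing a frame leaves $\epsilon_b$ and $\Pi$ unchanged. Your \textsc{T-HandleWith} part, your first reading of ``same checks'', and your observation that \textsc{E-Request-Deny} appends no $\reqevt{a}$ are sound refinements that the paper's sketch leaves implicit; the last one means the claim holds only for monitor-accepted requests.

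Drop the second reading, though. $\compile$ is never specified, so neutrality of handled events is an unsupported assumption. More importantly, a run without the frame would record commit or denied-commit events where the handled run records handled ones, and it would continue with a different value, so no induction on the trace can align the two monitor states.
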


\begin{proof}[Proof sketch]
The E-HandleEnter rule pushes the handler frame onto \(H\), and
E-HandleExit/E-FinishReturn restore the previous stack when the scoped
computation returns or finishes. The E-Perform-Handle rule appends
request and handled events before the arm runs, so either terminal arm
outcome remains auditable. The E-Perform-Commit rule is the only rule
that appends a commit event, and its premises include commit enforcement.
Therefore a handler can affect the interpretation of a request but
cannot manufacture commit authority.
\end{proof}

Taken together, these theorems do not assert that model outputs are correct,
stable, or safe. They establish a narrower property: whatever the model
returns, the surrounding program cannot commit an undeclared, unauthorized,
or unmonitored external action without crossing an explicit residual check or
raising a typed enforcement error, and it cannot erase a handled request from
the audit trace. This is the level at which a programming language can improve
agent-system reliability without treating model alignment as a type-system
property.

\section{Implementation}
\label{sec:implementation}

The current Etas prototype is implemented in Rust across component
repositories composed by a user-facing distribution workspace. Its
CLI composes component repositories for syntax, HIR, types, effects,
standard library metadata, host bindings, and interpretation. This
structure reflects the language design: the user sees one tool, while
compiler and interpreter responsibilities remain separated.

\subsection{Compiler Pipeline}
\label{sec:implementation-pipeline}

The implemented pipeline is:
\[
  \msf{source}
  \to \msf{AST}
  \to \msf{HIR}
  \to \msf{checked\ HIR}
  \to \msf{interpreter}.
\]
The frontend parses source, lowers declarations and bodies to source-shaped
HIR, resolves imports and names, checks types and effects, analyzes loop
progress, and verifies interpreter support. Its output is a
\mtt{CheckedProject} containing HIR together with the type, effect,
requested-action, source-map, and entry-point facts needed downstream.

The interpreter consumes this checked project directly. Before evaluation,
it derives an interpreter plan containing slot layouts, globals,
resources, intrinsic dispatch, host requirements, and action-mediation
metadata. This plan is execution metadata over HIR, not another language
representation. Direct HIR interpretation keeps diagnostics and interpreter
events connected to source spans while the language is still evolving.

\subsection{Effect, Trace-Spec, and Handler Diagnostics}
\label{sec:implementation-diagnostics}

The checker treats effect, trace-spec, and policy-monitor failures as
first-class diagnostics. Negative tests are expected to report targeted
errors rather than generic ``not implemented'' messages. This matters for
the design because the language is useful only if programmers can
understand why an action row is too narrow, a trace-spec obligation
cannot be proved, a model-callable tool hides high-impact effects, or a
handler attempts to resume illegally.

Package metadata is part of the same story. Imported modules and host
bindings must expose public type, effect, action, tool, and trace-spec
contracts, allowing downstream code to check effects and requested traces
without seeing every implementation body. This resembles separate
compilation for ordinary types; the exported interface also carries
authority and \TraceSpecAlgebra{} summaries.

\subsection{Checked-HIR Interpreter}
\label{sec:implementation-runtime}

\mtt{Interpreter::run\_checked} accepts a \mtt{CheckedProject}, an entry
point and arguments, host services, and run options. It first builds and
validates the interpreter plan, including host-readiness and action-mediation
facts, then evaluates HIR with explicit frames, slots, control signals, and
single-shot continuations. Flow and agent calls, handlers, checkpoints, and
resume are therefore interpreter behavior over checked HIR. Runtime limits
and retry budgets are implemented as orthogonal interpreter facilities; they
are not part of the \CoreEtas{} formalism.

Pure intrinsics execute locally. Model, tool, memory, approval, console, and
command boundaries are converted to typed requests and routed through supplied
host services. The result contains an interpreter value or diagnostics,
together with workflow events and checkpoints. This implementation provides
the compiler-and-interpreter baseline evaluated in this paper; a separate
optimization representation and production scheduler are outside the current
prototype.

\section{Evaluation}
\label{sec:evaluation}

The evaluation of \AgentLang{} should answer four questions.

\begin{description}
\item[RQ1: Expressiveness.]
Can the language express representative agent-system patterns without
encoding the safety-relevant structure outside the language?

\item[RQ2: Static detection.]
Do effect, policy, prompt-trust, memory, and handler checks catch
the intended classes of mistakes at compile time?

\item[RQ3: Runtime evidence.]
Does mediated execution produce traces that are sufficient for audit,
replay, recovery, and residual policy enforcement?

\item[RQ4: Optimization.]
Does checked program structure expose enough evidence to justify
agent-specific rewrites without weakening effects or trace specs?
\end{description}

The current prototype supports an artifact-oriented evaluation through
case-study programs, compiler diagnostics, and interpreter traces.  We use
these artifacts to examine whether safety-relevant structure remains visible
to static checking, runtime audit, and optimization discovery.

\subsection{Case Studies}
\label{sec:evaluation-cases}

Recent agent practice has converged on three recurring engineering
problems. \emph{Context engineering} generalizes prompt engineering into
the systematic selection, processing, and management of the information
made available to a model at inference time
\citep{mei2025contextengineering}; in software agents, project-specific
context files already encode architecture, interfaces, workflows, and
local policies \citep{mohsenimofidi2025contextagents}. \emph{Harness
engineering} concerns the runtime substrate around the model: tool
surfaces, retrieval strategy, tool-result presentation, structured
outputs, validation, retries, and evaluation harnesses. Recent search
experiments show that accuracy can depend strongly on the harness and
tool-calling style even when the underlying data are fixed
\citep{sen2026agentharnesses}. \emph{Loop engineering} is the emerging
practice of designing recurring agent systems that keep working until a
goal, budget, or acceptance predicate is reached; recent hands-free
agent loops combine executable evaluators, isolated worktrees, git
policies, tracing, and replay \citep{yu2026horizon}, and practitioner
accounts identify automations, worktrees, skills, connectors, and
subagents as the loop substrate \citep{griffiths2026loopengineering}.

\begin{figure*}[t]
\centering
\begin{minipage}[t]{0.31\textwidth}
\begin{lstlisting}[
  style=etasblock,
  basicstyle=\ttfamily\tiny\color{etasInk},
  xleftmargin=0pt,
  xrightmargin=0pt
]
spec PublicContext: trace =
    +Memory.read<ProjectMemory.Reports>
  & +Web.search<_>
  & -Memory.read<ProjectMemory.Secrets>;
flow BuildContext(req: DraftRequest) -> Prompt
~ PublicContext
{
  let prior = ProjectMemory.Reports.get(req.related);
  let hits = search_web(req.topic);
  return Prompt.new()
    .system(Trusted("Use approved context only."))
    .data({ request = req, prior, hits });
}
\end{lstlisting}
{\centering\footnotesize\textbf{(a)} Context engineering.\par}
\vspace{1mm}
\begin{lstlisting}[
  style=etasblock,
  basicstyle=\ttfamily\tiny\color{etasInk},
  xleftmargin=0pt,
  xrightmargin=0pt
]
spec TriageHarness: trace =
    +Web.search<_>  & -Shell.exec<_>
  & +Tickets.write<Sandbox>;
@model("GPT-5.5")
@tools([search_web, update_ticket])
@limits([Tokens(8000), Attempts(2)])
agent Triage(req: Ticket) -> TriageDecision
~ TriageHarness
{
  let d = perform infer<TriageDecision>(
    BuildTicketContext(req));
  update_ticket(req.id, d.summary);
  return d;
}
\end{lstlisting}
{\centering\footnotesize\textbf{(b)} Harness engineering.\par}
\end{minipage}
\hfill
\begin{minipage}[t]{0.33\textwidth}
\begin{lstlisting}[
  style=etasblock,
  basicstyle=\ttfamily\tiny\color{etasInk},
  xleftmargin=0pt,
  xrightmargin=0pt
]
spec RepairLoop: trace =
    +Repo.checkout<IsolatedWorktree>
  & +CI.run<Project>
  & +Review.request
  & +Repo.merge<ProtectedMain>
  & (CI.run<Project> >> Repo.merge<ProtectedMain>)
  & (Review.request >> Repo.merge<ProtectedMain>);


flow repair(goal: Issue) -> Patch
~ RepairLoop
{
  let wt = perform Repo.checkout<IsolatedWorktree>(goal.repo);

  for round in std.range(6) limit Attempts(6), Tokens(60000) do {

    let patch = RepairAgent.run({ goal, wt, round });
    let test = perform CI.run<Project>(wt, patch);

    if test.ok && std.ui.approve("merge?", patch, risk = High) {
      perform Repo.merge<ProtectedMain>(wt, patch);
      return patch;
    }

  }
  abort("repair budget exhausted");
}
\end{lstlisting}
{\centering\footnotesize\textbf{(c)} Loop engineering.\par}
\end{minipage}
\hfill
\begin{minipage}[t]{0.33\textwidth}
\begin{lstlisting}[
  style=etasblock,
  basicstyle=\ttfamily\tiny\color{etasInk},
  xleftmargin=0pt,
  xrightmargin=0pt
]
@model("gpt-5.5")
@tools([web.search, paper.search,
        repo.read, db.query])
agent Researcher(input: ResearchTask)
    -> ResearchResult
{
  return perform infer<ResearchResult>(
    ResearchPrompt(input));
}

spec LiteratureOnly: trace =
    +PaperSearch.search<_>
  & -WebSearch.search<_>
  & -Repo.read<_> & -Db.query<_>;

flow literature_batch(task: ResearchBatch)
    -> Array<ResearchResult>
    ~ LiteratureOnly
{
  var results = [];

  for q in task.questions limit Tokens(30000) do {
    let papers = paper.search(task.topic);

    results = results.push(Researcher.run(
      { question = q, papers }));
  }
  return results;
}

// derived optimization plan:
// hoist/dedup paper.search(task.topic)
// parallelize independent questions
// expose only paper.search
\end{lstlisting}
{\centering\footnotesize\textbf{(d)} Optimization.\par}
\end{minipage}
\caption{Four language-visible agent-engineering cases. The left column stacks
context engineering (a), which excludes secret memory from a typed prompt, and
harness engineering (b), which fixes the model, tools, budget, and allowed
actions. The middle column gives loop engineering (c): an isolated repair loop
must observe CI and review before protected merge. The right column demonstrates
an effect- and trace-aware optimization plan (d). The call-site spec narrows a
four-tool agent to paper search; a stable, loop-invariant search is hoisted and
deduplicated; and independent research iterations may run concurrently while
preserving result order and their shared token limit. Token and attempt bounds
are operational, while the effect, trace, and specialization facts that justify
these rewrites are statically visible.}
\Description{Four Etas examples arranged in three columns. Context and harness
examples are stacked in the left column; a bounded repair loop occupies the
middle; the right column combines tool-surface specialization, stable retrieval
hoisting, deduplication, and conflict-free parallel scheduling.}
\label{fig:case-study-examples}
\end{figure*}

The budget annotations in \figref{fig:case-study-examples}(b--c) exercise
orthogonal interpreter facilities and are not constructs of the
\CoreEtas{} formal development.
\figref{fig:case-study-examples} presents code snippets extracted from four programs supported by our \AgentLang{} implementation.

\paragraph{Context engineering.}
\figref{fig:case-study-examples}(a) treats the prompt context as a typed
program artifact rather than as an unstructured string assembled in
callbacks. The effect row states which context sources may be queried,
and the trace spec denies secret memory regions while admitting approved
report memory and web search. This makes a case study measurable in
two ways: the compiler can reject secret leakage before execution, and
the interpreter trace can explain which sources entered a model call when
an output is audited.

\paragraph{Harness engineering.}
\figref{fig:case-study-examples}(b) packages the agent harness into
source-level semantics: model choice, tool exposure, structured output
type, allowed tool actions, and an implementation-level runtime budget
all appear in the program. The
interesting comparison is not whether a framework can run the same
workflow, but whether the harness is visible to static analysis. In
\AgentLang, the compiler can warn if the exposed tool set contains a
dangerous action such as shell execution, insert residual checks when a
ticket destination is only known at runtime, and use handlers to run the
same harness in dry-run or mock mode without erasing the requested
actions from the trace.

\paragraph{Loop engineering.}
\figref{fig:case-study-examples}(c) represents a recurring repair loop as
a bounded program constrained by trace specs. The loop is allowed to operate only in
an isolated worktree, can run CI repeatedly under an interpreter-enforced
budget, and
can merge only after both CI and review have appeared earlier in the
trace. This gives the language concrete static-safety and audit hooks:
the compiler rejects definite ordering violations, while the interpreter
enforces the runtime budget and records the actions and checkpoints needed
to audit each iteration.

\paragraph{Answer to RQ1: Expressiveness.}
Yes, for the evaluated patterns. \figref{fig:case-study-examples}(a--c)
expresses model boundaries, tool exposure, typed context and memory, approval
ordering, and runtime limits in source declarations and specs rather than
external callback conventions. This evidence covers three compact patterns,
not general claims about large-program ergonomics.

\paragraph{Answer to RQ2: Static detection.}
Yes for the modeled properties, with residualization. Across the cases, the
checker infers effects and requested actions and rejects definite secret-memory,
shell, ordering, and handler violations. Predicates depending on concrete
runtime resources remain explicit residual checks rather than being silently
accepted or rejected.

\paragraph{Answer to RQ3: Runtime evidence.}
Yes at prototype scope. Checked-HIR execution emits request, handled, commit,
denial, workflow, budget, and checkpoint evidence sufficient to audit these
runs and drive the implemented replay and recovery hooks. The case studies do
not yet establish trace completeness at production scale.

\subsection{Optimization}
\label{sec:evaluation-optimization}

\figref{fig:case-study-examples}(d) combines three optimizations from the
language examples. First, package metadata connects tools to action signatures,
so \mtt{LiteratureOnly} narrows \mtt{Researcher}'s four-tool surface to
\(\{\mtt{paper.search}\}\). Second,
\mtt{paper.search(task.topic)} is loop invariant; when its action summary marks
the result stable within the run with \mtt{topic} as its cache key, the search
can be hoisted and one traced result can serve all iterations. Third, the
remaining \mtt{Researcher.run} calls have no data dependence or conflicting
escaping effects. They may therefore be scheduled concurrently when the
monitor admits either order, with results restored to source order and the
shared token limit preserved.

\paragraph{Answer to RQ4: Optimization.}
Yes. The
current frontend records tool metadata, specialized action summaries, spec
conformance, and source-shaped loop structure in checked HIR, which is
sufficient to identify these candidates and their side conditions. Because \AgentLang{} makes stability, effect conflict, trace
equivalence, and limit preservation compiler-visible, this information will enable an optimizer to apply rewrites and improve runtime performance.

\section{Related Work}
\label{sec:related-work}

\paragraph{Effects, handlers, and authority.}
Algebraic effects and handlers separate operation interfaces from their
interpretations \cite{plotkin2003algebraic,plotkin2013handling,
bauer2015programming}; Koka shows that row-typed effects can be inferred and
compiled efficiently \cite{leijen2014koka,leijen2017typedirected}. Related
systems study row-polymorphic and direct-style handlers, effect abstraction,
handler names, and refinement reasoning \cite{hillerstrom2016liberating,
lindley2017dobedobedo,biernacki2019abstracting,xie2022firstclass,
kawamata2024answer}. \AgentLang{} adopts rows and first-class handlers but
retains a typed requested-action trace after handling: a handler may remove an
escaping obligation, but cannot erase the request or grant commit authority.
This design also connects scope-based capability accounts with type-based
effects \cite{brachthaeuser2022effects}: tool exposure limits which actions an
agent can request, while effects and specs mediate the
concrete commit.

\paragraph{Trace enforcement and information flow.}
Security automata characterize enforceable trace properties
\cite{schneider2000enforceable}. Etas compiles kinded trace specs to such
monitors, but first uses effect inference and abstract requested traces to
discharge provable obligations; residual monitors handle resource values and
model-selected arguments known only at runtime. Events distinguish requests,
handled outcomes, denials, and commits, so local interpretation remains
auditable. Prompt trust is related to information-flow control
\cite{denning1976lattice}, but targets the narrower problem of exposing
untrusted or secret flows into privileged model inputs without requiring a
full dependent information-flow system.

\paragraph{Agent and coordination models.}
Reward-guided synthesis treats agent control structures as synthesized
programs \cite{cui2024reward}; Etas instead provides the checked programming
model around inference, tools, memory, approval, traces, replay, and secondary
runtime limits. Choreographies make coordination explicit through projection
and communication structure \cite{giallorenzo2024choral,bates2025efficient},
while session types govern protocol progression \cite{honda1998language}.
Etas shares their insistence on explicit structure, but targets model-backed
agents, tool authority, and trace enforcement rather than endpoint projection
or deadlock freedom. Its implementation also differs from optimization DSLs
such as Allo \cite{chen2024allo}: the current compiler attaches types, effects,
and action summaries to source-shaped HIR and interprets it directly, without
a separate schedule or optimization IR. This matters for agent runtimes because
authority can change between request and commit. Etas therefore preserves the
provenance of requested actions even when handlers, deployment grants, or
scheduler choices decide whether an external action actually occurs. This keeps
optimization tied to the same policy evidence used for enforcement,
rather than treating scheduling as a separate correctness argument.

\section{Conclusion}
\label{sec:conclusion}

This paper presents \AgentLang{}, a programming-language for
agent systems in which model-backed inference, tool calls, prompts,
memory, approvals, policies, handlers, and traces are explicit semantic
objects rather than framework conventions. The design separates
deterministic computation from agentic nondeterminism and separates
escaping effects from the persistent requested-action trace, allowing
handlers to support testing, replay, recovery, and host adaptation
without hiding the authority that a program requested. Its static
semantics combines ordinary typing with spec conformance,
\TraceSpecAlgebra{} monitors, abstract trace discharge, and residual
runtime obligations; its dynamic semantics mediates every request and
commit through policy, effect-boundary, deployment, and sandbox checks;
and its soundness statements identify the resulting guarantees for
types, effects, traces, policy safety, and handler transparency. Our \AgentLang{} prototype and artifact-oriented evaluation show that these ideas can
be implemented as compiler diagnostics, checked-HIR execution,
trace-aware runtime plans, fixture tests, and case studies for context,
harness, and loop engineering. The broader conclusion is that
agent-system safety and auditability need not be recovered from callbacks
and logs after the fact: they can be made part of the language interface
that programmers write, compilers check, and runtimes enforce.

\FloatBarrier
\bibliographystyle{ACM-Reference-Format}
\bibliography{reference}

\end{document}